\newcommand{\Ch}{\Children}
\newcommand{\Children}{\ensuremath{{\mathrm{Ch}}}}
\newcommand{\Cl}{\ensuremath{{\mathrm{Cl}}}} 		% Clusters
\newcommand{\MP}{\ensuremath{M_\P}} % Size of a profile
\newcommand{\TG}{\ensuremath{\Gamma}} % Triplet graph 
\newcommand{\incompatible}{\texttt{incompatible}}
\newcommand{\Uinit}{\ensuremath{U_\mathrm{init}}} % Initial U
\newcommand{\Ubef}{\ensuremath{U_\mathrm{bef}}} % U_before
\newcommand{\Uaft}{\ensuremath{U_\mathrm{aft}}} % U_after
\newcommand{\Urem}{\ensuremath{U_\mathrm{rem}}} % U_removed
\newcommand{\Yinit}{\ensuremath{Y_\mathrm{init}}} % Node set of G(Uinit)
\newcommand{\Build}{\ensuremath{\textsc{Build}}} % Original Build algorithm
\newcommand{\BuildG}{\ensuremath{\textsc{BuildST}}} % Abstract version of compatibility tester
\newcommand{\Hcal}{\ensuremath{\mathcal{H}}}
\renewcommand{\P}{\ensuremath{\mathcal{P}}}
\newcommand{\R}{\ensuremath{\mathcal{R}}}
\newcommand{\COUNT}{\ensuremath{\mathtt{count}}} % Size of L(U)
\newcommand{\LIST}{\ensuremath{\mathtt{List}}} % U(i)
\newcommand{\SEMI}{\ensuremath{\mathtt{singleton}}} % U(i)
\newtheorem{theorem}{Theorem}
\newtheorem{lemma}{Lemma}
\newtheorem{observation}{Observation}
\begin{document}
\title{Fast Compatibility Testing for Rooted Phylogenetic Trees\thanks{Supported in part by the National Science Foundation under grants CCF-1017189 and CCF-1422134.}}
\author{
Yun Deng\thanks{Department of Computer Science, Iowa State University, Ames, IA 50011, USA, {\tt yundeng@iastate.edu}}
\and
David Fern\'{a}ndez-Baca\thanks{Department of Computer Science, Iowa State University, Ames, IA 50011, USA, {\tt fernande@iastate.edu}}
}

\date{\empty}

% Uncomment the next 4 comments for final version
\maketitle

\begin{abstract}
We consider the following basic problem in phylogenetic tree construction. Let $\P = \{T_1, \ldots, T_k\}$ be a collection of rooted phylogenetic trees over various subsets of a set of species.    The tree compatibility problem asks whether there is a tree $T$ with the following property: for each $i \in \{1, \dots, k\}$, $T_i$ can be obtained from the restriction of $T$ to the species set of $T_i$ by contracting zero or more edges.  If such a tree $T$ exists, we say that $\P$ is compatible.  

We give a $\tilde{O}(\MP)$ algorithm for the tree compatibility problem, where $\MP$ is the total number of nodes and edges in $\P$.  Unlike previous algorithms for this problem, the running time of our method does not depend on the degrees of the nodes in the input trees.  Thus, it is equally fast on highly resolved and highly unresolved trees.\end{abstract}

\section{Introduction}

Building a phylogenetic tree that encompasses all living species is one of the central challenges of computational biology.  Two obstacles to achieving this goal are lack of data and conflict among the data that is available.  The data shortage is tied to the vast disparity in the amount of information at our disposal for different families of species and the limited amount of comparable data across families \cite{Sanderson:2008}.  One approach to overcoming this obstacle begins by identifying subsets of species for which enough data is available, and building phylogenies for each subset.  The resulting trees are then synthesized into a single phylogeny ---a supertree--- for the combined set of species.  This approach, proposed
in the early 90s \cite{Baum:1992,Ragan:1992}, has been used successfully to build large-scale phylogenies (see, e.g., \cite{BinindaEmonds:Nature:07,HinchliffPNAS2015}).  

Any attempt at synthesizing phylogenetic information from multiple input trees must deal with the potential for conflict among these trees. Conflict may arise due to errors, or due to phenomena such as gene duplication and loss, and horizontal gene transfer.  A fundamental question is whether conflict exists at all; that is, does there exist a supertree that exhibits the evolutionary relationships implicit in each input tree?  We can formalize this question as follows.  Let $\P = \{T_1, \ldots, T_k\}$ be a collection  of rooted phylogenetic trees, where, for each $i \in \{1, \dots , k\}$, $T_i$ is a phylogenetic tree for a set of species $L(T_i)$.  The \emph{tree compatibility problem} asks whether there exists a phylogenetic supertree $T$ for the set of species $\bigcup_{i = 1}^k L(T_i)$ such that, for each $i \in \{1, \dots , k\}$, $T_i$ can be obtained from $T|L(T_i)$ --- the minimal subtree of $T$ spanning $L(T_i)$ --- by zero or more contractions of internal edges (that is, $T|L(T_i)$ is homeomorphic to $T$).  If the answer is ``yes'', then $\P$ is said to be \emph{compatible}; otherwise, $\P$ is \emph{incompatible}.  

Here we present an algorithm that solves the compatibility problem for rooted trees in $O(\MP \log^2 \MP)$ time, where $\MP$ is the total number of vertices and edges in the trees in $\P$.   This running time is independent of the degrees of the internal nodes of the input trees.

%If $n$ denotes the total number of distinct species in the trees in $\P$, then $\MP = O(nk)$. 

\paragraph{Previous work.}  Aho et al.\ \cite{Aho81a} gave the first polynomial-time algorithm for the rooted tree compatibility problem.  Their motivation was not phylogenetics, but relational data\-bases.  Steel \cite{Steel92} was perhaps the first to note the relevance of Aho et al.'s algorithm to supertree construction. His version of the Aho et al.\ algorithm,  which he called the \Build\ algorithm, has been a major influence in later work, including the present paper. 

Henzinger et al.\ \cite{HenzingerKingWarnow99} showed that one can check the compatibility of a collection $\R$ of rooted triples --- that is, phylogenetic trees on three species --- in $O(|\R| \log^2 |\R|)$ time.  (The time bound stated in \cite{HenzingerKingWarnow99} is higher, but can be improved using a faster dynamic graph connectivity data structure \cite{HolmLichtenbergThorup:2001}.)  Any collection of trees $\P$ can be encoded by a collection of rooted triples $\R(\P)$, obtained by enumerating the restriction of each input tree to every three-element subset of its species set (see Section \ref{sec:prelims}).  If $n$ denotes the total number of distinct species in $\P$, then we get a trivial upper bound of $|\R(P)| = O(n^3 k)$.  We can improve on this by finding a \emph{minimal} set $\R^*$ of rooted triples that define the input trees.  If the trees are binary --- \emph{fully resolved}, in the language of phylogenetics ---, then $O(n)$ triples suffice for each tree, giving us $|\R^*| = O(n k)$.  If input trees admit non-binary --- that is, \emph{unresolved} --- nodes, however, the number of triples needed per input tree is roughly proportional to $n^2$ (the precise bound depends on the sum of the products of the degrees of internal nodes and the degrees of their children \cite{GrunewaldSS:2007}), giving us $|\R^*| = O(n^2 k)$.  Of course, the extra step of finding $\R^*$ adds to the complexity of the algorithm.

The tree compatibility problem is related to the \emph{incomplete directed perfect phylogeny problem} (IDPP).  Indeed, any collection of $k$ phylogenetic trees on $n$ distinct species can be encoded as a problem of testing the compatibility of a collection of $O(\MP)$ ``directed partial characters'' on $n$ species\footnote{For a precise definition of partial characters and IDPP, we refer the reader to Pe'er et al.\ \cite{PeerShamirSharan04}.}. Intuitively, each such character encodes the species in the subtree rooted at some node in an input tree.  There is a $\tilde{O}(nm)$ algorithm to test the compatibility of $m$ incomplete characters\ \cite{PeerShamirSharan04}, which can be adapted to yield a $\tilde{O}(n\MP)$ algorithm for tree compatibility.  

When the input trees are unrooted, the tree compatibility problem becomes NP-hard \cite{Steel92}.  Nevertheless, the decision version is polynomial-time solvable if $k$ is fixed \cite{BryantLagergren06}; that is, the problem is fixed-parameter tractable in $k$.  The proof of fixed-parameter tractability in \cite{BryantLagergren06} relies on Courcelle's Theorem \cite{Courcelle90}, and thus is an existence proof, rather than a practical algorithm.  

\paragraph{Our contributions.}
At a high level, our algorithm resembles \Build\ \cite{Steel92,SempleSteel03}.  There are, however, important differences.  \Build\ relies on the \emph{triplet graph}, whose nodes are the species and where there is an edge between two species if they are involved in a triplet (see Section \ref{sec:prelims}).  Our algorithm relies instead on  intersection graphs of sets of species associated with certain nodes of the input trees.  
%Each input tree node represents the set of species in the subtree rooted at that node.  
Our graphs allow a more compact representation of the triplets induced by the trees in $\P$ (see Section \ref{sec:testCompat}).  The key to the correctness of our approach is the intimate relationship between the triplet graph and our intersection graph  (see Lemma \ref{lm:Wi} of Section \ref{sec:testCompat}).  We remark that intersection graphs have a long history of use in testing compatibility, beginning with the work of Buneman \cite{Bun74}.

We also take ideas from other sources.  From Pe'er et al.'s IDPP algorithm \cite{PeerShamirSharan04}, we adapt the idea of a \emph{semi-universal node}.  Although the graphs used to solve IDPP and rooted compatibility are different, semi-universal nodes play similar roles in each case: they capture the notion of sets of nodes in the input trees that map to the same node in a supertree, if a supertree exists.   The relationship between our algorithm and Pe'er et al.'s goes deeper.  Our approach can be viewed as an algorithm for IDPP that takes advantage of the fact that our particular set of incomplete characters arises from a collection of trees.

Intersection graphs are a convenient tool to prove the correctness for our algorithm.  They are less convenient for an implementation, because it is hard to maintain them dynamically, in the way our algorithm requires. The difficulty lies in recomputing set intersections whenever the graphs are updated.  We avoid this by using \emph{display graphs}, an idea that we borrow from the proof of the fixed-parameter tractability of unrooted compatibility  \cite{BryantLagergren06}.  The display graph of a collection $\P$ is obtained by identifying leaves in the input trees that have the same label.  Display graphs provide all the connectivity information we need for our intersection graphs (see Lemma~\ref{lm:GnH} of Section~\ref{sec:implementation}), but are easier to maintain.  

Through our techniques, we achieve what, to our knowledge, is the first algorithm for rooted compatibility  to achieve near-linear time under all input conditions, regardless of the degrees of the nodes in the input trees.  This is an essential quality for dealing with large datasets.

\paragraph{Contents.}
Section \ref{sec:prelims} reviews basic concepts in phylogenetics, defines compatibility formally, and introduces triplets and the triplet graph.
Section \ref{sec:testCompat} presents our intersection graph approach to testing tree compatibility. Section \ref{sec:implementation} describes the implementation details needed to achieve the $O(\MP \log^2 \MP)$ time bound.  Section \ref{sec:discussion} contains some final remarks.

\section{Preliminaries}
\label{sec:prelims}

For each positive integer $r$, $[r]$ denotes the set $\{1, \dots , r\}$.

\paragraph{Phylogenetic trees.} 
Let $T$ be a rooted tree.   We use $V(T)$, $E(T)$, and $r(T)$ to denote the nodes, edges, and the root of $T$, respectively. For each $x \in V(T)$, we use $\Ch(x)$ and $T(x)$ to denote the set of children of $x$ and the subtree of $T$ rooted at $x$, respectively.  Suppose $u, v \in V(T)$.  Then, $u$ is a \emph{descendant} of $v$ if $v$ lies on the path from $u$ to $r(T)$ in $T$.  Note that $v$ is a descendant of itself.  $T$ is \emph{binary}, or \emph{fully resolved}, if each of its internal nodes has two children.

A (rooted) \emph{phylogenetic tree} is a rooted tree $T$ where every internal node has at least two children, along with a bijection $\lambda$ that maps each leaf of $T$ to an element of a set of \emph{species}, denoted by $L(T)$. 
For each $x \in V(T)$, $L(x)$ denotes the set of species mapped to the leaves of $T(x)$; that is, $L(x) = \{\lambda(v) : v \text{ is a leaf in } T(x)\}$. $L(x)$ is called the \emph{cluster} at $x$.  Note that $L(r(T)) = L(T)$. 
%The cluster $L(T)$ is called the \emph{universal cluster} (for $T$).  
%The \emph{maximal non-universal clusters} of $T$ are the clusters $L(x)$ such that $x$ is a child of $r(T)$. 
The set of all clusters in $T$ is $\Cl(T) = \{L(x) : x \in V(T)\}$.   
%We extend the child-parent relation to clusters as follows.  Suppose $A \subseteq L(T)$ is the cluster at node $x$ in $T$.  Then, for each $y \in \Ch(x)$, we say that cluster $L(y)$ is a \emph{child} of cluster $L(x)$.

The following lemma, adapted from \cite[p.~52]{SempleSteel03}, is part of the folklore of phylogenetics.

\begin{lemma}\label{lm:compat}
Let $\Hcal$ be a collection of non-empty subsets of a set of species $X$ that includes all singleton subsets of $X$ as well as $X$ itself.  If there exists a phylogenetic tree $T$ such that $\Cl(T) = \Hcal$, then, up to isomorphism, $T$ is unique.
\end{lemma}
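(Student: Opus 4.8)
The plan is to show that, for any phylogenetic tree $T$, the cluster set $\Cl(T)$ together with the containment order $\subseteq$ encodes the entire rooted tree structure and the leaf labelling of $T$. Once this is established, any two trees $T$ and $T'$ with $\Cl(T) = \Cl(T') = \Hcal$ give rise to the same ordered set $(\Hcal, \subseteq)$, and translating this common order back through $T$ and $T'$ will yield the desired isomorphism.

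First I would record two structural facts about a single tree $T$. (i) For any $x, y \in V(T)$, the subtrees $T(x)$ and $T(y)$ are either vertex-disjoint or nested, so the clusters $L(x)$ and $L(y)$ are either disjoint or comparable under inclusion; thus $\Cl(T)$ is a laminar family. (ii) The map $\phi_T : V(T) \to \Cl(T)$ defined by $\phi_T(x) = L(x)$ is a bijection. Surjectivity is immediate from the definition of $\Cl(T)$. For injectivity, suppose $x \ne y$ but $L(x) = L(y)$; by (i) one of the two, say $y$, is a proper ancestor of the other. Let $c$ be the child of $y$ that is an ancestor of $x$ (possibly $c = x$). Since $y$ is internal it has a second child $c'$, and $L(c')$ is a non-empty set disjoint from $L(c) \supseteq L(x) = L(y) \supseteq L(c')$, a contradiction. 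This is exactly where the hypothesis that every internal node has at least two children is essential: in its absence a path of degree-one nodes would produce distinct nodes with identical clusters and destroy the bijection.

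Next I would prove the key equivalence: for $x, y \in V(T)$, $L(x) \subseteq L(y)$ if and only if $x$ is a descendant of $y$. The ``if'' direction is clear. For ``only if'', assume $L(x) \subseteq L(y)$ with $x \ne y$; by (i) the alternatives are that $y$ is a descendant of $x$ --- forcing $L(y) \subseteq L(x)$ and hence $L(x) = L(y)$, contradicting injectivity --- or that $T(x)$ and $T(y)$ are disjoint, forcing $L(x) \cap L(y) = \emptyset$ and contradicting $\emptyset \ne L(x) \subseteq L(y)$. Hence $\phi_T$ is an isomorphism from the ancestry order on $V(T)$ to $(\Cl(T), \subseteq)$.

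To conclude, given $T$ and $T'$ with $\Cl(T) = \Cl(T') = \Hcal$, I would set $\psi = \phi_{T'}^{-1} \circ \phi_T$, a bijection $V(T) \to V(T')$ that is an isomorphism of the two ancestry orders. An isomorphism of these tree orders preserves the covering (parent) relation and the top element, so $\psi$ preserves edges and maps $r(T)$ to $r(T')$; thus $\psi$ is a rooted-tree isomorphism. Finally, the leaves of $T$ are precisely the nodes whose clusters are minimal in $(\Hcal, \subseteq)$, i.e.\ the singletons, so $\psi$ maps leaves to leaves; and since a leaf $v$ has $L(v) = \{\lambda(v)\} = \phi_T(v)$, its image $\psi(v)$ carries the same singleton cluster and hence the same label. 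Therefore $\psi$ is an isomorphism of phylogenetic trees, proving $T$ unique up to isomorphism. The main obstacle is the equivalence of the previous paragraph, and with it the injectivity of $\phi_T$; everything else is bookkeeping once the containment order is shown to reproduce ancestry faithfully.
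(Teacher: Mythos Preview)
Your argument is correct. The paper does not supply its own proof of this lemma: it merely cites it as folklore, adapted from Semple and Steel's text, so there is nothing to compare against line by line. What you have written is precisely the standard proof of this fact --- establish that $x \mapsto L(x)$ is an order isomorphism from the ancestry order on $V(T)$ onto $(\Cl(T),\subseteq)$, and then transport the isomorphism through $\Hcal$ --- and all the steps check out, including the use of the ``at least two children'' hypothesis to get injectivity of $\phi_T$.
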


%\begin{lemma}\label{lm:compat}
%Let $\Hcal$ be a collection of non-empty subsets of a set of species $X$ that includes all singleton subsets of $X$ as well as $X$ itself.  Then, there is a phylogenetic tree $T$ such that $\Cl(T) = \Hcal$ if and only if for every $A, B \in \Hcal$,  
%%\begin{equation}\label{eqn:compat}
%$A \cap B \in \{\emptyset,A,B\}.$
%%\end{equation}
%Further, if there is such a phylogenetic tree $T$, then, up to isomorphism, $T$ is unique.
%\end{lemma}
%%
%%We say that two subsets $A$ and $B$ of $X$ are \emph{compatible} if they satisfy condition~\eqref{eqn:compat}; otherwise, they are \emph{incompatible}.  
%%%Note that condition~\eqref{eqn:compat} is equivalent to having either $A \cap B = \emptyset$, $A \subseteq B$, or $B \subseteq A$.

Let $T$ be a phylogenetic tree and $A$ be a set of species.  The \emph{restriction} of $T$ to $A$, denoted $T{|A}$ is the phylogenetic tree with species set $A$ where $\Cl(T|A) = \{C \cap A : C \in \Cl(T) \text{ and } C \cap A \neq \emptyset\}$.  Let $T'$ be a phylogenetic tree.  $T$ \emph{displays} $T'$ if $\Cl(T') \subseteq \Cl(T|L(T'))$.  Equivalently, $T$ displays $T'$ if $T|L(T')$ is homeomorphic to $T'$.

A \emph{rooted triple} is a binary phylogenetic tree on three leaves. A rooted triple with leaves $a$, $b$, and $c$ is denoted $ab|c$ if the path from $a$ to $b$ does not intersect the path from $c$ to the root.  We treat $ab|c$ and $ba|c$ as equivalent. 

When restricted to the three-element subsets of its species set, a phylogenetic tree $T$ induces a set $\R(T)$ of rooted triples, defined as $\R(T) = \{T|X : X \subseteq L(T),  |X| = 3 \text{ and } T|X \text{ is binary}\}$. 

\begin{lemma}[{\cite[p.~119]{SempleSteel03}}]\label{lm:triples}
Let $T$ and $T'$ be two phylogenetic trees.  Then $T$ displays $T'$ if and only if $\R(T') \subseteq \R(T)$.
\end{lemma}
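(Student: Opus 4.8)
The plan is to route both directions of the equivalence through a single combinatorial characterization of induced triples in terms of clusters, which converts statements about the shapes of three-leaf restrictions into statements about membership in $\Cl(\cdot)$. Concretely, I would first establish the following bridge observation: for a phylogenetic tree $S$ and distinct leaves $a,b,c \in L(S)$, we have $ab|c \in \R(S)$ if and only if there is a cluster $C \in \Cl(S)$ with $\{a,b\} \subseteq C$ and $c \notin C$. One direction of the observation is immediate by taking $C$ to be the cluster at $\mathrm{lca}_S(a,b)$, which lies strictly below $\mathrm{lca}_S(a,b,c)$ precisely when the restriction $S|\{a,b,c\}$ is the binary triple $ab|c$; the converse holds because such a $C$ forces $\mathrm{lca}_S(a,b)$ to be a proper descendant of $\mathrm{lca}_S(a,b,c)$. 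Throughout I use the standard convention $L(T') \subseteq L(T)$ that is implicit in the definition of ``displays''.

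For the forward direction, assume $T$ displays $T'$, so $\Cl(T') \subseteq \Cl(T|L(T'))$, and fix any $ab|c \in \R(T')$. By the observation there is $C' \in \Cl(T')$ with $\{a,b\} \subseteq C'$ and $c \notin C'$. Since $C'$ is also a cluster of $T|L(T')$, the definition of restriction gives some $C \in \Cl(T)$ with $C' = C \cap L(T')$. Then $\{a,b\} \subseteq C$, and because $c \in L(T')$ while $c \notin C'$, we must have $c \notin C$. Applying the observation to $T$ yields $ab|c \in \R(T)$, whence $\R(T') \subseteq \R(T)$.

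For the converse, assume $\R(T') \subseteq \R(T)$. I would first reduce to the case of a common leaf set using the transitivity of restriction, $(T|A)|X = T|X$ for $X \subseteq A$, which gives $\R(T|L(T')) = \{ab|c \in \R(T) : a,b,c \in L(T')\}$; since every triple of $T'$ uses only leaves of $L(T')$, this yields $\R(T') \subseteq \R(T|L(T'))$. It therefore suffices to prove the following: if two phylogenetic trees $T_1, T_2$ on the same leaf set $X$ satisfy $\R(T_1) \subseteq \R(T_2)$, then $\Cl(T_1) \subseteq \Cl(T_2)$. Given a nontrivial cluster $C \in \Cl(T_1)$ (singletons and $X$ are trivially clusters of $T_2$), the observation shows $ab|c \in \R(T_1) \subseteq \R(T_2)$ for every $a,b \in C$ and $c \in X \setminus C$. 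Let $m = \mathrm{lca}_{T_2}(C)$; I claim $L(m) = C$ in $T_2$, which establishes $C \in \Cl(T_2)$. The inclusion $C \subseteq L(m)$ is clear, and for the reverse I argue by contradiction: if some $c \in X \setminus C$ were a descendant of $m$, then choosing $a,b \in C$ in distinct child subtrees of $m$ (such $a,b$ exist since $m$ is the MRCA of $C$) gives $\mathrm{lca}_{T_2}(a,b) = m$; but since $c$ is also a descendant of $m$, we get $\mathrm{lca}_{T_2}(a,b,c) = m = \mathrm{lca}_{T_2}(a,b)$, so $a$ and $b$ fail to meet strictly below their common ancestor with $c$, i.e., $ab|c \notin \R(T_2)$, a contradiction.

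The bridge observation and the forward direction are essentially bookkeeping with the restriction operator, and I expect them to be routine. The crux is the converse. The reduction to a common leaf set must correctly account for the triples of $T$ that involve leaves of $L(T) \setminus L(T')$, and the reconstruction $L(m) = C$ relies on a single node $m$ capturing $C$ \emph{exactly}, with the MRCA argument ruling out every intruder leaf $c$ at once. Pinning down that $\mathrm{lca}_{T_2}(a,b,c)$ collapses to $m$ whenever $c$ descends from $m$ --- which simultaneously handles the case where $c$ groups with $a$, with $b$, or fans out with both --- is the step where the argument most easily goes wrong, and is where I would spend the most care.
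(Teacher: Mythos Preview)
The paper does not supply a proof of this lemma; it merely cites Semple and Steel's textbook, so there is no in-paper argument to compare against. Your proposal is correct and is essentially the standard textbook argument: the cluster--triple bridge (``$ab|c \in \R(S)$ iff some cluster of $S$ separates $\{a,b\}$ from $c$'') makes the forward direction immediate from $\Cl(T') \subseteq \Cl(T|L(T'))$, and for the converse your reduction to a common leaf set via transitivity of restriction is sound, as is the MRCA contradiction showing $L(m) = C$. The one place worth an extra line of justification is the existence of $a,b \in C$ lying in distinct child subtrees of $m$: you use that $|C| \ge 2$ forces $m$ to be internal and that minimality of $m$ as a common ancestor of $C$ forces $C$ to meet at least two child subtrees --- both true, but state them explicitly.
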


\paragraph{Profiles and compatibility.}
Throughout the rest of this paper $\P = \{T_1, \dots, T_k\}$ denotes a set where, for each $i \in [k]$, $T_i$ is a phylogenetic tree.  We refer to $\P$ as a \emph{profile}, and write $L(\P)$ to denote $\bigcup_{i\in[k]} L(T_i)$, the \emph{species set} of $\P$.  
We write $V(\P)$ for $\bigcup_{i\in[k]} V(T_i)$, $E(\P)$ for $\bigcup_{i\in[k]} E(T_i)$, %$L(\P)$ for $\bigcup_{i\in[k]} L(T_i)$, 
and $\R(\P)$ for $\bigcup_{i \in [k]} \R(T_i)$. %, and $\hat{E}(\P)$ for $\bigcup_{i\in[k]}\hat{E}(T_i)$. 
Given a subset $A$ of $L(P)$, $\P|A$ denotes the profile $\{T_1|A, \dots, T_k|A\}$.
The \emph{size} of $\P$ is $\MP = |V(\P)| + |E(\P)|$.  Note that $\MP = O(nk)$.

Profile $\P$ is \emph{compatible} if there exists a phylogenetic $X$-tree $T$ such that, for each $i \in [k]$, $T$ displays $T_i$.  If such a tree $T$ exists, we say that $T$ \emph{displays} $\P$.  See Figure~\ref{fig:Profile}.
\begin{figure}
\begin{center}
\includegraphics[scale=0.34]{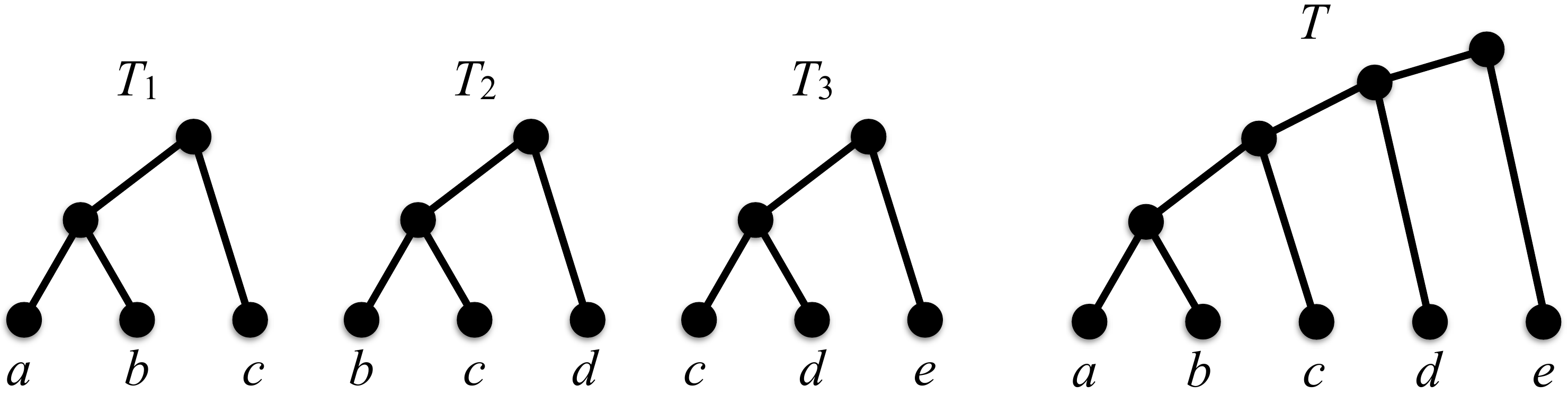}
\end{center}
\vspace{-4\parsep}
\caption{A profile $\P = \{T_1, T_2, T_3\}$ and a tree $T$ that displays $\P$.}
\label{fig:Profile}
\end{figure}

\paragraph{The triplet graph.}   The \emph{triplet graph} of a profile $\P$, denoted $\TG(\P)$, is the 
graph whose vertex set is $L(P)$ %--- the species set of $\P$ --- 
and where there is an edge between species $a$ and $b$ if and only if there
exists a $c \in L(P)$ such that $ab|c \in \R(\P)$.  The following observation concerning singleton profiles will be useful.

\begin{observation}\label{obs:tripletDisc}
Let $T$ be a phylogenetic tree with $|L(T)| > 2$.  Let $u_1, \dots , u_p$ be the children of $r(T)$.  Then, the connected components of 
$\TG(\{T\})$ are $L(u_1), \dots, L(u_p)$, where $p \ge 2$.
\end{observation}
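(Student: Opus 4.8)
The plan is to reduce the statement to a clean combinatorial description of the edges of $\TG(\{T\})$. Concretely, I would prove the following characterization: for distinct species $a, b \in L(T)$, the pair $\{a,b\}$ is an edge of $\TG(\{T\})$ if and only if $a$ and $b$ lie in a common subtree $T(u_i)$, equivalently $\mathrm{lca}_T(a,b) \neq r(T)$. Granting this, the result follows quickly: since $u_1, \dots, u_p$ are exactly the children of $r(T)$, the clusters $L(u_1), \dots, L(u_p)$ partition $L(T)$; any two distinct species in the same $L(u_i)$ are adjacent, because their least common ancestor is a descendant of $u_i$ and hence not $r(T)$, so each $L(u_i)$ induces a clique and is therefore connected; and no edge joins species in different parts. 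Thus the connected components are precisely $L(u_1), \dots, L(u_p)$. Finally $p \ge 2$ because $|L(T)| > 2$ forces $r(T)$ to be an internal node, which in a phylogenetic tree has at least two children.

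To establish the edge characterization I would first restate what it means for $ab|c$ to belong to $\R(T)$ in terms of least common ancestors. Combining the path-based definition of a rooted triple with the requirement that $T|\{a,b,c\}$ be binary, one has $ab|c \in \R(T)$ exactly when $a, b, c$ are distinct and $\mathrm{lca}_T(a,b)$ is a strict descendant of $\mathrm{lca}_T(a,b,c) = \mathrm{lca}_T(a,c) = \mathrm{lca}_T(b,c)$. For the ``if'' direction, suppose $v := \mathrm{lca}_T(a,b) \neq r(T)$. Then $L(v) \subsetneq L(T)$, so I can choose a species $c \in L(T) \setminus L(v)$; since $a, b \in L(v)$ but $c \notin L(v)$, both $\mathrm{lca}_T(a,c)$ and $\mathrm{lca}_T(b,c)$ equal $\mathrm{lca}_T(v, c)$, which is a strict ancestor of $v$, so $ab|c \in \R(T)$ and $\{a,b\}$ is an edge. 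For the ``only if'' direction, if $\{a,b\}$ is an edge then $ab|c \in \R(T)$ for some $c$; were $c \in L(v)$, then $\mathrm{lca}_T(a,c)$ would be a descendant of $v$, contradicting that it is a strict ancestor of $v$, so $c \notin L(v)$ and hence $L(v) \neq L(T)$, i.e.\ $v \neq r(T)$.

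The step I expect to require the most care is the translation between the path-based definition of $ab|c$ and the least-common-ancestor condition, in particular verifying that the witness $c$ chosen outside $L(v)$ really yields a \emph{binary} restriction $T|\{a,b,c\}$ equal to $ab|c$ rather than a fan. This amounts to checking that $\mathrm{lca}_T(a,c) = \mathrm{lca}_T(b,c)$ lies strictly above $\mathrm{lca}_T(a,b)$, which is exactly the cluster bookkeeping sketched above. Once this is in place, the remaining partition-and-clique argument and the bound $p \ge 2$ are routine.
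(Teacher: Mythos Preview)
Your proposal is correct. The paper states this result as an Observation and supplies no proof, so there is nothing to compare against; your edge characterization via least common ancestors---$\{a,b\}$ is an edge of $\TG(\{T\})$ iff $\mathrm{lca}_T(a,b)\neq r(T)$---is the natural way to justify it and cleanly fills in the details the authors left implicit.
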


\section{Testing Compatibility\label{sec:testCompat}}

Here we describe our compatibility algorithm and prove its correctness.  We begin with some definitions.  

Let $U$ be a subset of $V(\P)$ and let $L(U)$ denote $\bigcup_{u \in U} L(u)$. Then, $G_\P(U)$ denotes the graph with vertex set $U$ and where $u, v \in U$ are joined by an edge if and only if $L(u) \cap L(v) \neq \emptyset$. That is, $G_\P(U)$ is the intersection graph of the clusters associated with the nodes in $U$.  For each $i \in [k]$, let $U(i) = U \cap V(T_i)$.  We say that $U$ is \emph{valid} if, for each $i \in [k]$, 
\vspace{-0.1cm}
\begin{enumerate}[(V1)]
\vspace{-0.5\parskip}
 \itemsep1pt \parskip0pt \parsep0pt
\item
if $|U(i)| \ge 2$, then there exists a node $v \in V(T_i)$ such that
$U(i) \subseteq \Ch(v)$ and
\item
$L(U(i)) = L(T_i) \cap L(U)$.
\end{enumerate}

\vspace{-0.1cm}

Observe that the set $\Uinit$ defined as follows is valid.
\begin{equation}\label{eqn:U0}
\Uinit = \{r(T_i) : i \in [k]\}
\end{equation}
Note that $L(\Uinit) = L(\P)$.
From this point forward, we assume that $G_\P(\Uinit)$ is connected. No generality is lost by doing so.  To see why, observe that If $G_\P(\Uinit)$ is not connected, then $\P$ can be partitioned into a collection of species-disjoint profiles $\P_1, \dots, \P_r$ such that $\P$ is compatible if and only if $\P_j$ is compatible for all $j \in [r]$.

The next observation follows from the definition of a valid set.

\begin{observation}\label{lem:restriction}
If $U$ is a valid subset of $V(\P)$, then, for each $i \in [k]$, $\Cl(T_i|L(U)) = \{L(U(i))\} \cup \{L(v) : v \text{ is a descendant of a node in } U(i)\}$. 
\end{observation}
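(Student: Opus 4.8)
The plan is to compute $\Cl(T_i|L(U))$ straight from the definition of restriction and match it, set element by set element, against the claimed right-hand side. Fix $i \in [k]$ and assume $U(i) \neq \emptyset$ (if $U(i) = \emptyset$, then (V2) forces $L(T_i) \cap L(U) = \emptyset$ and both sides degenerate). The first step is a normalization using (V2): since every cluster $C \in \Cl(T_i)$ satisfies $C \subseteq L(T_i)$, we have $C \cap L(U) = C \cap (L(T_i) \cap L(U)) = C \cap L(U(i))$. Recalling that $\Cl(T_i) = \{L(x) : x \in V(T_i)\}$ and that $\Cl(T_i|L(U)) = \{C \cap L(U) : C \in \Cl(T_i),\ C \cap L(U) \neq \emptyset\}$, this reduces the goal to establishing
\begin{align*}
&\{L(x) \cap L(U(i)) : x \in V(T_i),\ L(x) \cap L(U(i)) \neq \emptyset\} \\
&\qquad = \{L(U(i))\} \cup \{L(v) : v \text{ is a descendant of a node in } U(i)\}.
\end{align*}

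For the inclusion $\supseteq$ I would first note that any $v$ that is a descendant of some $u \in U(i)$ has $L(v) \subseteq L(u) \subseteq L(U(i))$, so $L(v) \cap L(U(i)) = L(v) \neq \emptyset$ and $L(v)$ lies in the left-hand set (take $x = v$). To place $L(U(i))$ in the left-hand set I need a node $x$ with $L(U(i)) \subseteq L(x)$: if $|U(i)| = 1$ its unique node works, while if $|U(i)| \ge 2$, condition (V1) supplies a node $v$ with $U(i) \subseteq \Ch(v)$, whence $L(U(i)) = \bigcup_{u \in U(i)} L(u) \subseteq L(v)$, so $x = v$ works.

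The inclusion $\subseteq$ is the heart of the argument. Take any $x \in V(T_i)$ with $L(x) \cap L(U(i)) \neq \emptyset$ and write $L(x) \cap L(U(i)) = \bigcup_{u \in U(i)} (L(x) \cap L(u))$. The tree fact I will use repeatedly is that the clusters of two nodes are nested when one is a descendant of the other and disjoint otherwise; together with (V1) this means that when $|U(i)| \ge 2$ the nodes of $U(i)$ are pairwise incomparable siblings under a common parent $v$. I then split on whether $x$ is a descendant of some node of $U(i)$. If $x$ is a descendant of some $u \in U(i)$, then $x$ is incomparable to every other node of $U(i)$ (disjoint subtrees), so $L(x) \cap L(U(i)) = L(x) \cap L(u) = L(x)$, a cluster of the second type. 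Otherwise some $u \in U(i)$ must be a proper descendant of $x$ (as the intersection is nonempty); by (V1) the node $x$ is then an ancestor of the common parent $v$ (or $v$ itself), so $L(U(i)) \subseteq L(v) \subseteq L(x)$ and hence $L(x) \cap L(U(i)) = L(U(i))$. This exhausts the cases and gives equality.

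I expect this final dichotomy to be the main obstacle: one must argue that an arbitrary node $x$ of $T_i$ can interact with the \emph{entire} set $U(i)$ in only these two clean ways. This is exactly where validity condition (V1) is indispensable—forcing the nodes of $U(i)$ to be siblings under a single parent prevents $x$ from straddling several nodes of $U(i)$, and guarantees the sharp alternative between $L(x) \cap L(U(i)) = L(x)$ and $L(x) \cap L(U(i)) = L(U(i))$.
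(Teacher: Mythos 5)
Your proof is correct and is essentially the direct-from-definitions argument that the paper omits (it states this as an observation that ``follows from the definition of a valid set''): the normalization $C \cap L(U) = C \cap L(U(i))$ via (V2), and the dichotomy---driven by (V1) and the nested-or-disjoint property of clusters in a rooted tree---that any $x$ with $L(x)\cap L(U(i))\neq\emptyset$ yields either $L(x)$ (when $x$ lies below some node of $U(i)$) or $L(U(i))$ (when $x$ lies at or above their common parent), is exactly the right way to fill it in. The only loose ends are cosmetic: the case $U(i)=\emptyset$ (where the stated identity reads $\emptyset = \{\emptyset\}$ and must be interpreted with the convention that the empty set is not a cluster) and the sub-case $|U(i)|=1$ of your final dichotomy, where there is no common parent from (V1) but the argument degenerates harmlessly to $L(x)\cap L(U(i)) = L(u) = L(U(i))$.
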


Together with Lemma~\ref{lm:compat}, Observation \ref{lem:restriction} shows that $T_i|L(U)$ is completely determined by the descendants of $U(i)$.
%A subset $X$ of $L(\P)$ is \emph{valid} if, for every $i \in [k]$, there is a node $v_{X,i} \in V(T_i)$ such that $L(v_{X,i}) = X \cap L(T_i)$.  A
%subset $U$ of $V(\P)$ is \emph{valid} if there exists a valid subset $X$ of $L(\P)$ such that, for every $i \in [k]$, $U \cap V(T_i)$ is the set of children of $v_{X,i}$.  Note that $L(\P)$ is a valid set of species; therefore, the set $U = \{u : \text{$u$ is a child of $r(T_i)$ for some $i \in [k]$}\}$ is a valid set of nodes.

A valid subset $U$ of $V(\P)$ is \emph{compatible} if there exists a phylogenetic tree $T$ with $L(T) =  L(U)$ that displays $T_i|L(U)$ for every $i \in [k]$.  If such a tree $T$ exists, we say that $T$ \emph{displays $U$}. 
%A node $u \in U$ is \emph{semi-universal} if there is an $i \in [k]$ such that $U \cap V(T_i)$ is the singleton $\{u\}$.

\begin{lemma}\label{lm:substructure}
Profile $\P$ is compatible if and only if every valid subset of $V(\P)$ is compatible.
\end{lemma}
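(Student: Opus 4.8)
The plan is to prove the two directions of the biconditional separately. The backward implication will follow almost immediately from the fact, already recorded in the excerpt, that $\Uinit$ is valid; the forward implication will rest on a single structural fact, namely that the ``displays'' relation is preserved under restriction of both trees to a common species set. First I would isolate that fact as a small auxiliary claim, since it is the only place where genuine work is needed.

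For the backward direction, assume every valid subset of $V(\P)$ is compatible. Because $\Uinit = \{r(T_i) : i \in [k]\}$ is valid, it is in particular compatible, which by definition furnishes a phylogenetic tree $T$ with $L(T) = L(\Uinit) = L(\P)$ that displays $T_i|L(\Uinit)$ for every $i \in [k]$. Since $L(\Uinit) = L(\P) \supseteq L(T_i)$, every cluster $C \in \Cl(T_i)$ satisfies $C \cap L(\Uinit) = C$, so $\Cl(T_i|L(\Uinit)) = \Cl(T_i)$ and hence $T_i|L(\Uinit) = T_i$ up to isomorphism by Lemma~\ref{lm:compat}. Thus $T$ is a phylogenetic tree on $L(\P)$ displaying every $T_i$, which is exactly the statement that $\P$ is compatible.

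For the forward direction, assume $\P$ is compatible and fix a tree $T$ with $L(T) = L(\P)$ displaying each $T_i$. Given an arbitrary valid subset $U$, I would propose $T' = T|L(U)$ as the witness to the compatibility of $U$: since $L(U) \subseteq L(\P) = L(T)$, this $T'$ is a well-defined phylogenetic tree on species set $L(U)$, so it only remains to check that it displays $T_i|L(U)$ for each $i$. The auxiliary claim I would use here is that for any species set $A$ one has $\R(T|A) = \{ab|c \in \R(T) : a,b,c \in A\}$, which I would derive from the transitivity of restriction, $(T|A)|B = T|B$ for $B \subseteq A$ (verified at the level of clusters), applied to the three-element sets $B \subseteq A$. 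Granting this, the display hypothesis $\R(T_i) \subseteq \R(T)$ from Lemma~\ref{lm:triples} yields $\R(T_i|L(U)) = \{ab|c \in \R(T_i) : a,b,c \in L(U)\} \subseteq \{ab|c \in \R(T) : a,b,c \in L(U)\} = \R(T|L(U))$, so $T|L(U)$ displays $T_i|L(U)$ by Lemma~\ref{lm:triples} again, and $U$ is compatible.

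The main obstacle, and the only step demanding real care, is establishing the restriction-preserves-display fact. I expect to handle it through the triple characterization rather than by manipulating tree topologies directly, because Lemma~\ref{lm:triples} reduces ``displays'' to containment of induced triple sets, and restriction acts on triples in the transparent way of simply discarding every triple with a leaf outside $A$. I would also note explicitly that validity of $U$ is not used in the forward direction beyond making the phrase ``$U$ is compatible'' well-defined; the decisive use of validity is the single fact that $\Uinit$ is valid, which powers the backward direction.
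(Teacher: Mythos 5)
Your proof is correct and follows essentially the same route as the paper: the backward direction via the valid set $\Uinit$, and the forward direction by restricting a tree $T$ displaying $\P$ to $L(U)$. The only difference is that the paper simply asserts that $T|L(U)$ displays $U$, whereas you justify that step explicitly through Lemma~\ref{lm:triples} and the transitivity of restriction, which is a welcome (and correct) filling-in of a detail the paper leaves implicit.
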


\begin{proof}
($\Leftarrow$) If every valid subset of $V(\P)$ is compatible, then, in particular, so is the set $\Uinit$ of Equation \eqref{eqn:U0}. Let $T$ be a tree that displays $\Uinit$.  Then, $L(T) = L(\Uinit) = L(\P)$.  Thus, for every $i \in [k]$, $T_i|L(T) = T_i$, and thus $T$ displays $T_i$.  Hence, $\P$ is compatible.

($\Rightarrow$) Suppose $\P$ is compatible, but there is a valid subset $U$ of $V(\P)$ that is not compatible.  Let $T$ be a tree that displays $\P$. But then $T|U$ displays $U$, a contradiction.
\end{proof}

\BuildG\ (Algorithm~\ref{alg:BuildG}), which is closely related to Semple and Steel's \Build\ algorithm~\cite{SempleSteel03}, determines whether a valid set $U \subseteq V(\P)$ is compatible. 
%(so, effectively, $\textsc{Build} = \textsc{Build}_0$).  
The key difference between \BuildG\ and \Build\ is that the latter uses the triplet graph $\TG(\P)$, while \BuildG\ uses the graph $G_\P(U)$, for different subsets $U$ of $V(\P)$.  As we show in Lemma \ref{lm:Wi}, the two graphs are closely related.  Nevertheless, $G_\P(U)$ offers some computational advantages over the triplet graph.  Intuitively, this is because $G_\P(U)$ is a more compact representation of the triplets in $\R(\P)$.  

$\BuildG(U)$ attempts to build a tree $T_U$ for $U$.
Step~\ref{alg:rU} initializes the root of $T_U$.
If $L(U)$ consists of one or two species, then $U$ is trivially compatible; Steps~\ref{alg:size1}--\ref{alg:size2Do} handle these cases.  The loop in lines~\ref{alg:forSemiU}--\ref{alg:forSemiULoop} identifies the indices $i \in [k]$ such that $U(i)$ is a singleton. For each such $i$, it removes the single element $v$ in $U(i)$ and replaces $v$ by its children in $T_i$.  As we argue in the proof of Theorem~\ref{thm:BuildG}, when $\P$ is compatible, all such nodes $v$ map to the same node $w$ in the $T$ that displays $\P$, in the sense that $L(w)$ is the smallest cluster in $T$ such that $L(v) \subseteq L(w)$\footnote{Thus, $v$ plays the role of a \emph{semi-universal node}, in the sense of Pe'er et al.~\cite{PeerShamirSharan04}.}.
%We refer to such a node $v$ as \emph{semi-universal}\footnote{The reason for this name is that all nodes that are found to be semi-universal during an execution of $\BuildG(U)$ map to a single node $w$ in the supertree.  Thus,
%}, in analogy with similar nodes in Pe'er et al.'s algorithm for incomplete directed perfect phylogeny~\cite{PeerShamirSharan04}.  
In Theorem~\ref{thm:BuildG}, we also show that, if $G_\P(U)$ remains connected after steps \ref{alg:forSemiU}--\ref{alg:forSemiULoop}, then $U$ is incompatible.  This case is handled in Line~\ref{alg:incompat}.  Otherwise, Lines \ref{alg:recurseBegin}--\ref{alg:recurseEnd} recursively process each connected component of $G_\P(U)$.  If the recursive calls succeed in finding trees for all the components, these trees are assembled into a phylogeny for $U$ by joining them to the root created in Step~\ref{alg:rU}.  If any of the recursive calls determines that a component is incompatible, then $U$ is declared to be incompatible.

%\begin{figure}%[!h]
%\begin{small}
\begin{algorithm}[t]
\SetAlgoLined
\SetNoFillComment
\DontPrintSemicolon
%\TitleOfAlgo{\BuildG$(U)$}
\KwIn{A valid set $U \subseteq V(\P)$.}
\KwOut{A tree $T_U$ that displays $U$, if $U$ is compatible; \texttt{incompatible} otherwise.
}
%Let $S = \bigcup_{u \in V(G)} L(u)$ \;
Create a node $r_U$ \label{alg:rU}\;
\If{$|L(U)| = 1$\label{alg:size1}}
{\Return the tree consisting of node $r_U$, labeled by the single species in $L(U)$\;}
\If{$|L(U)| = 2$\label{alg:size2}}
{\Return the tree consisting of node $r_U$ and two children, each labeled by a different species in $L(U)$ \label{alg:size2Do}\;}
\ForEach{$i \in [k]$ such that $|U(i)| = 1$\label{alg:forSemiU}}
{
Let $v$ be the single element in  $U(i)$\label{alg:forSemiULoopLet}\;
$U = (U \setminus \{v\}) \cup \Ch(v)$\label{alg:forSemiULoop}
}

Let $W_1, W_2, \dots, W_p$ be the connected components of $G_\P(U)$ \label{alg:connected}

\If{$p = 1$}
{
	\Return \incompatible\label{alg:incompat}\;
}
\ForEach{$j \in [p]$ \label{alg:recurseBegin}}
{
	Let $t_j = \BuildG(W_j)$\; \label{alg:recurse}
	\If{$t_j$ is a tree}
	{
		Add $t_j$ to the set of subtrees of $r_U$
	}
	\Else
	{
		\Return \incompatible \label{alg:recurseEnd}
	}
}
\Return the tree with root $r_U$ \;
\caption{\BuildG$(U)$}\label{alg:BuildG}
\end{algorithm}
\vspace{2\parsep}

%\end{small}
%\end{figure}

The correctness of \BuildG\ relies on two lemmas, the first of which can be proved using induction.

\begin{lemma}\label{lm:subtrees}
%\textbf{Claim.}
If, given a valid set $U \subseteq V(\P)$, $\BuildG(U)$ returns a tree $T_U$, then $T_U$ is a phylogenetic tree such that $L(T_U) = L(U)$.
\end{lemma}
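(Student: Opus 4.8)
The plan is to prove Lemma~\ref{lm:subtrees} by structural induction on the recursion of $\BuildG$, following the recursive structure of the algorithm. Recall that the claim is twofold: whenever $\BuildG(U)$ returns a tree $T_U$, then (a) $T_U$ is a valid phylogenetic tree (every internal node has at least two children, and the leaf-labelling is a bijection onto its species set), and (b) $L(T_U) = L(U)$. I would set up the induction on $|L(U)|$, or equivalently on the depth of recursion, since each recursive call in Line~\ref{alg:recurse} is made on a connected component $W_j$ with $L(W_j) \subsetneq L(U)$.

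\paragraph{Base cases.} First I would dispatch the two base cases handled in Steps~\ref{alg:size1}--\ref{alg:size2Do}. If $|L(U)| = 1$, the returned tree is the single node $r_U$ labelled by the one species in $L(U)$; this is trivially a phylogenetic tree with $L(T_U) = L(U)$. If $|L(U)| = 2$, the returned tree is $r_U$ with two leaf children labelled by the two distinct species of $L(U)$; again every internal node (just $r_U$) has two children, the labelling is a bijection, and $L(T_U) = L(U)$. These cases require only unwinding definitions.

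\paragraph{Inductive step.} For the recursive case ($|L(U)| > 2$), I would first note that the loop in Lines~\ref{alg:forSemiU}--\ref{alg:forSemiULoop} only rewrites $U$ by replacing singleton-intersection nodes $v$ by their children $\Ch(v)$; since $v$ is an internal node of a phylogenetic tree, $L(v) = \bigcup_{w \in \Ch(v)} L(w)$, so this rewriting leaves $L(U)$ unchanged. Thus after the loop we still have the same species set $L(U)$. When a tree is returned (so $p \ge 2$, and every recursive call returned a tree), the algorithm attaches each $t_j = \BuildG(W_j)$ as a subtree of $r_U$. By the induction hypothesis each $t_j$ is a phylogenetic tree with $L(t_j) = L(W_j)$. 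I would then verify the three required properties of $T_U$: the species sets $L(W_1), \dots, L(W_p)$ are pairwise disjoint (because they are distinct connected components of the intersection graph $G_\P(U)$, so no two nodes in different components share a species), which guarantees the leaf-labelling of $T_U$ remains a bijection; the root $r_U$ has $p \ge 2$ children, and every other internal node lies inside some $t_j$ and hence has at least two children by the inductive hypothesis; and finally $L(T_U) = \bigcup_{j=1}^p L(t_j) = \bigcup_{j=1}^p L(W_j) = L(U)$, using again that the loop preserved $L(U)$.

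\paragraph{Main obstacle.} The step I expect to require the most care is verifying that the leaf-labels of the assembled tree form a bijection, i.e.\ that the clusters $L(W_1), \dots, L(W_p)$ are pairwise disjoint and together cover $L(U)$. Disjointness is exactly the statement that species in different components of $G_\P(U)$ cannot coincide, which follows from the definition of the intersection graph $G_\P(U)$: if a species belonged to $L(u)$ for some $u \in W_{j}$ and to $L(u')$ for some $u' \in W_{j'}$ with $j \ne j'$, then $L(u) \cap L(u') \neq \emptyset$ would force an edge between $u$ and $u'$, contradicting that they lie in distinct components. Coverage follows from $\bigcup_j L(W_j) = L(U)$, which holds because $W_1, \dots, W_p$ partition the vertex set of $G_\P(U)$. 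Once these set-theoretic facts are in hand, the conclusion that $T_U$ is a phylogenetic tree with $L(T_U) = L(U)$ is immediate.
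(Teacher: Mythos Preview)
Your approach is correct and matches the paper's: the paper simply states that the lemma ``can be proved using induction'' and gives no further detail, so your structural induction on the recursion is exactly what is intended.

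One small point deserves tightening. In the inductive step you justify that the loop in Lines~\ref{alg:forSemiU}--\ref{alg:forSemiULoop} preserves $L(U)$ by writing ``since $v$ is an internal node of a phylogenetic tree, $L(v)=\bigcup_{w\in\Ch(v)}L(w)$.'' But nothing prevents the single element $v$ of some $U(i)$ from being a leaf (for instance, a root of a one-leaf input tree, or a leaf child introduced in an earlier recursive call), in which case $\Ch(v)=\emptyset$ and that equality fails. The conclusion $L(U_{\mathrm{aft}})=L(U_{\mathrm{bef}})$ still holds, but the reason is connectivity of $G_\P(U)$ together with $|L(U)|>2$: if the species $s=\lambda(v)$ appeared only in clusters of such singleton leaves, those leaves would form an isolated clique in $G_\P(U)$ with $L=\{s\}$, contradicting connectivity and $|L(U)|>2$. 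Hence $s$ lies in $L(u)$ for some $u$ that survives the loop. With this adjustment your argument goes through unchanged.
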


The next lemma is central to the correctness proof of  \BuildG.

\begin{lemma}\label{lm:Wi}
Let $W_1, \dots , W_p$ be the connected components of $G_\P(U)$ at step~\ref{alg:connected} of \BuildG$(U)$, for some valid set $U \subseteq V(\P)$.   
Then, 
\vspace{-0.1cm}
\begin{enumerate}[(i)]
\vspace{-0.5\parskip}
 \itemsep1pt \parskip0pt \parsep0pt
\item
for each $j \in [p]$, $W_j$ is a valid set, and
%\item
%For each $i \in [k]$, $|W_{ji}| \neq 1$.  
%\item
%If $|L(W_j)| \ge 2$, then $L(W_j)$ is a maximal subset of $L(U)$ with the following property.  For any $a, b \in L(W_j)$, there exists an $i \in [k]$ such that $a, b \in L(v)$, for some $v \in W_{ji}$.
\item
the connected components of $\TG(\P|L(U))$ are precisely $L(W_1), \dots , L(W_p)$.
\end{enumerate}
\end{lemma}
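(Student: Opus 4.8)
The plan is to treat the set $U$ present at step~\ref{alg:connected} as a valid set that, crucially, has \emph{no singleton fibers}: since the loop of lines~\ref{alg:forSemiU}--\ref{alg:forSemiULoop} replaces every node $v$ with $U(i)=\{v\}$ by its (at least two) children, we have $|U(i)|\neq 1$ for every $i\in[k]$ once step~\ref{alg:connected} is reached. I will use validity alone for part~(i), and validity together with the no-singleton property for part~(ii); a profile consisting of a single unresolved tree shows that~(ii) fails without the latter, so I must flag where it is used. For part~(i), fix $j$ and $i$ and write $W_j(i)=W_j\cap V(T_i)$. Since $W_j\subseteq U$ we have $W_j(i)\subseteq U(i)$, so~(V1) is inherited: if $|W_j(i)|\ge 2$ then $|U(i)|\ge 2$, and the node $v$ with $U(i)\subseteq\Ch(v)$ also witnesses $W_j(i)\subseteq\Ch(v)$. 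For~(V2), the inclusion $L(W_j(i))\subseteq L(T_i)\cap L(W_j)$ is immediate. For the reverse, take $a\in L(T_i)\cap L(W_j)$, say $a\in L(w)$ with $w\in W_j$. Then $a\in L(T_i)\cap L(U)$, so~(V2) for $U$ gives a node $u\in U(i)$ with $a\in L(u)$; but then $L(u)\cap L(w)\ni a$, so $u$ and $w$ are adjacent (or equal) in $G_\P(U)$ and hence $u$ lies in the same component $W_j$. Thus $u\in W_j(i)$ and $a\in L(W_j(i))$, proving~(V2).

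For part~(ii) I first record that $L(W_1),\dots,L(W_p)$ partition $L(U)$, which is exactly the vertex set of $\TG(\P|L(U))$: they cover $L(U)$ since $\bigcup_j L(W_j)=L(U)$, and they are pairwise disjoint because a species in $L(W_j)\cap L(W_{j'})$ with $j\neq j'$ would yield a $G_\P(U)$-edge between the two components. It then remains to prove (a) no edge of $\TG(\P|L(U))$ joins distinct sets $L(W_j)$, and (b) each $L(W_j)$ induces a connected subgraph of $\TG(\P|L(U))$. Throughout I use the shape of $T_i|L(U)$ given by Observation~\ref{lem:restriction} and Lemma~\ref{lm:compat}: when $|U(i)|\ge 2$, the clusters of $T_i|L(U)$ are $L(U(i))$ together with the clusters of descendants of the nodes of $U(i)$, and by~(V1) the clusters $L(u)$, $u\in U(i)$, are pairwise disjoint; hence the maximal proper subclusters of the root cluster $L(U(i))$ are exactly these $L(u)$, so the children of the root of $T_i|L(U)$ are the nodes carrying them.

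For~(a), an edge $ab$ of $\TG(\P|L(U))$ arises from a triple $ab|c\in\R(T_i|L(U))$ for some $i$; since $U(i)$ contains a node covering $a$, we have $|U(i)|\ge 2$. Writing $a\in L(u_a)$, $b\in L(u_b)$ with $u_a,u_b\in U(i)$, if $u_a\neq u_b$ then $a$ and $b$ lie under distinct children of the root of $T_i|L(U)$, forcing $\mathrm{lca}(a,b)$ to be the root and contradicting $ab|c$. Hence $u_a=u_b=:u$, so $a,b\in L(u)\subseteq L(W_j)$ for the component $W_j$ containing $u$, and the edge stays inside a single $L(W_j)$.

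Part~(b) is the main obstacle, and is exactly where the no-singleton property is essential. Fix $u\in U(i)$; since $|U(i)|\ge 2$, some sibling $u'\in U(i)$ satisfies $L(u')\cap L(u)=\emptyset$, so choosing any $c\in L(u')\subseteq L(U)$ I claim $ab|c\in\R(T_i|L(U))$ for \emph{every} pair $a,b\in L(u)$: indeed $a,b$ lie under the root-child carrying $L(u)$ while $c$ lies under a different root-child, so $\mathrm{lca}(a,b)$ is a proper descendant of $\mathrm{lca}(a,b,c)$. Thus $L(u)$ is a clique, hence connected, in $\TG(\P|L(U))$ (this is precisely the step that breaks when $|U(i)|=1$, since then $L(u)=L(U(i))$ admits no such external $c$, and by Observation~\ref{obs:tripletDisc} $L(u)$ may split into several components). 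To connect all of $L(W_j)$, take $a,b\in L(W_j)$ with $a\in L(u)$, $b\in L(u'')$, $u,u''\in W_j$, and a path $u=x_0,\dots,x_m=u''$ in $G_\P(U)$; consecutive nodes share a species, and since each $L(x_\ell)$ is connected, so is $\bigcup_\ell L(x_\ell)$, giving a path from $a$ to $b$. Combining~(a) and~(b) with the partition established above shows that the connected components of $\TG(\P|L(U))$ are precisely $L(W_1),\dots,L(W_p)$.
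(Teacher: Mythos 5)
Your proof is correct and follows essentially the same route as the paper's: your edge characterization for $\TG(\P|L(U))$ (an edge $ab$ exists iff $a,b\in L(v)$ for some $v\in U$, proved via the structure of $T_i|L(U)$ from Observation~\ref{lem:restriction} and the fact that $|U(i)|\neq 1$ after the loop) is exactly the paper's Claim~1, and the component correspondence is then established by the same path-chaining arguments, merely repackaged as a partition with no crossing edges and connected parts rather than as a mutual refinement of two partitions. One small remark: your part~(i) is actually more complete than the paper's, which verifies only that the post-loop set $U$ remains valid and leaves the validity of the individual components $W_j$ implicit, whereas you supply the (easy but necessary) argument that a species shared between $L(W_j)$ and $L(T_i)$ forces the witnessing node of $U(i)$ into $W_j$; conversely, you assert rather than verify that the post-loop $U$ is still valid, though that check is immediate.
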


\begin{proof}
(i) Let $\Ubef$ and $\Uaft$ denote the values of $U$ before and after the executing steps  \ref{alg:forSemiU}--\ref{alg:forSemiULoop}.  Each element of $\Uaft$ is either an element of $\Ubef$ or a child of some $v \in \Ubef$.  Indeed, in the latter case, every child of $v$ is in $\Uaft$.  Thus, since, by assumption, $\Ubef$ is valid, and for every non-leaf node $v$, $L(v) = \bigcup_{w \in \Ch(v)} L(w)$, $\Uaft$ must also be valid.  Part (i) follows.  %\FIX{What about recursion?}

(ii) 
%We show the following.  Let $a$ and $b$ be any two species in $L(U)$ and let $Y =\{u \in U : L(u) \cap \{a,b\} \neq \emptyset\}$.  Then, there is a path between $a$ and $b$ in $\TG(\P|L(U))$ if and only if there is a path in $G(U)$ between any two nodes in $Y$.
%
We first show that the following holds after steps \ref{alg:forSemiU}--\ref{alg:forSemiULoop}.
\begin{quote}
\textbf{Claim 1.}
\emph{Let $a$ and $b$ be any two species in $L(U)$. Then, $(a,b)$ is an edge in $\TG(\P|L(U))$ if and only if there exists a node $v \in U$ such that $a,b \in L(v)$.}
\end{quote}
\begin{quote}
\textit{Proof of claim.}
Observe that, after steps \ref{alg:forSemiU}--\ref{alg:forSemiULoop}, $|U(i)| \neq 1$, for each $i \in [k]$.

($\Leftarrow$) Suppose that $(a,b)$ is an edge in $\TG(\P|L(U))$. Then, there is an $i \in [k]$ such that $ab|x \in \R(T_i | L(U))$.  Thus, there must be a proper descendant $w$ of $r(T_i | L(U))$ such that $\{a,b\} \subseteq L(w)$. Observation~\ref{lem:restriction} and the fact that $|U(i)| > 1$ imply that $L(w) \subseteq L(v)$ for some $v \in U(i)$.

($\Rightarrow$) Suppose that there is an $i \in [k]$ such that $a, b \in L(v)$  for some $v \in U(i)$.  Choose a node $v' \in U(i) \setminus \{v\}$ --- such a $v'$ must exist, since $|U(i)| \ge 2$ --- and choose some $x \in L(v')$.  Then, $ab|x \in \R(T_i|L(U))$, and, hence, $(a,b)$ is an edge of $\TG(\P|L(U))$. 
\hfill$\Box$
\end{quote}

Observe that both $\Pi_1 =  \{A : A$ is a connected component of $\TG(\P|L(U))\}$ and $\Pi_2 = \{L(W) : W$ is a connected component of $G_\P(U)\}$ are partitions of $L(U)$. We prove that $\Pi_1 = \Pi_2$ by showing that (a) for each connected component $A$ of $\Gamma(\P|L(U))$ there exists a connected component $W$ of $G_\P(U)$ such that $A \subseteq L(W)$, and (b)
for each connected component $W$ of $G_\P(U)$ there exists a connected component $A$ of $\Gamma(\P|L(U))$ such that $L(W) \subseteq A $.

(a) Let $A$ be any connected component of $\TG(\P|L(U))$.  We argue that any two species $a,b$ in $A$ must be in the same connected component of $G_\P(U)$. 
Let $U_a = \{v \in U: a \in L(v)\}$ and $U_b = \{v \in U : b \in L(v)\}$.  Then, each of $U_a$ and $U_b$ is a clique in $G_\P(U)$.  It thus suffices to show that there is a path between some node in $U_a$ and some node in $U_b$.

%We do so by showing that there exists a path in $G_\P(U)$  between the nodes $v_a, v_b \in U$ such that $a \in L(v_a)$ and $b \in L(v_b)$.
%Let $Y =\{v \in U : L(u) \cap \{a,b\} \neq \emptyset\}$. 
By the definition of $A$, there exists a path between $a$ and $b$ in $\TG(\P|L(U))$. Suppose this path is $\rho =\langle a_1, \dots , a_m\rangle$, where $a_1 = a$ and $a_m= b$.  
By Claim 1, for each $l \in [m-1]$, there exists a node $w_l \in U$ such that $\{a_l, a_{l+1}\} \subseteq L(w_i)$.
For each $l \in [m-2]$, $L(w_l) \cap L(w_{l+1}) \neq \emptyset$, so there is a edge between $w_i$ and $w_{i+1}$ in $G_\P(U)$.  Hence, $\pi = \langle w_1, \dots, w_{m-1}\rangle$ is a path from $w_1$ to $w_{m-1}$ in $G_\P(U)$. By the definition of $\rho$, $a \in L(w_1)$ and $b \in L(w_{m-1})$, so $w_1 \in U_a$ and $w_l \in U_b$.  This completes the proof of part (a).
%Hence, $\{a,b\} \subseteq L(W)$. 
%For any connected component $W'$ of $G_\P(U)$ $L(W') \cap L(W) = \emptyset$. Therefore, for each connected component $A$ of $\Gamma(\P|L(U))$, there exists \emph{only one} connected component $W$ of $G_\P(U)$ such that $A \subseteq L(W)$.

(b) Let $W$ be any connected component of $G_\P(U)$.  If $|L(W)| = 1$, the statement holds trivially, so assume that $|L(W)| > 1$.  We argue that any two species $a,b$ in $L(W)$ are in the same connected component of $\TG(\P|L(U))$. 
Let $v_a$ and $v_b$ be nodes in $W$ such that $a \in L(v_a)$ and $b \in L(v_b)$. If $v_a = v_b$, then, by Claim 1, $(a, b)$ is an edge of $\TG(\P|L(U))$, and we are done. So, suppose instead that $v_a \neq v_b$.  

Let us call a path $\pi$ from $v_a$ to $v_b$ \emph{good} if $|L(w)| > 1$ for every node $w$ in $\pi$. 
%including $w = v_a$ and $w = v_b$, 
We claim that there exists a good path from $v_a$ to $v_b$.  To prove this claim, we first argue that we can choose $v_a$ and $v_b$ such that $|L(v_a)|, |L(v_b)| > 1$.  Indeed, consider the case of species $a$ (the case for $b$ is analogous).  If $|L(v)| = 1$ for every node $v \in W$ such that $a \in L(v)$, then we would have $|L(W)| = 1$, contradicting our assumption that $|L(W)| > 1$.  
Now, suppose the path $\pi$ from $v_a$ to $v_b$ has a node $w \notin \{v_a, v_b\}$ such that $|L(w)| = 1$. Let $w'$ and $w''$ be the predecessor and successor of $w$ in $\pi$.   Then, $L(w') \cap L(w'') = L(w) \neq \emptyset$, so there is an edge between $w'$ and $w''$.  Thus, we can delete $w$ from $\pi$ and the resulting sequence remains a path between $v_a$ and $v_b$.

Let $\pi = \langle w_1, \dots , w_l \rangle$, where $w_1 = v_a$ and $w_l = v_b$, be a good path from $v_a$ to $v_b$ in $G_\P(U)$.  Choose a sequence of species $\rho = \langle c_1, \dots , c_{l+1} \rangle$, where $c_1 = a$, $c_{l+1} = b$ and, for each $j \in [l]$, $c_j, c_{j+1} \in L(w_j)$ and $c_j \neq c_{j+1}$.  Note that such a choice is always possible.  Then, by Claim 1, $(c_j,c_{j+1})$ is an edge of $\TG(\P|L(U))$.  Hence, $\rho$ is a path from $a$ to $b$ in $\TG(\P|L(U))$.
\end{proof}

We are now ready to prove the correctness of \BuildG.

\begin{theorem}\label{thm:BuildG}
Let $\Uinit$ be the set defined in Equation \eqref{eqn:U0}.  Then, \BuildG$(\Uinit)$ either (i) returns a tree $T$ that displays $\P$, if $\P$ is compatible, or (ii) returns \texttt{incompatible} otherwise.
\end{theorem}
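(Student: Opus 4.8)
The plan is to prove, by induction on $|L(U)|$, the following statement for \emph{every} valid set $U \subseteq V(\P)$: $\BuildG(U)$ returns a tree that displays $U$ if $U$ is compatible, and returns \texttt{incompatible} otherwise. The theorem follows by specializing to $\Uinit$: since $L(\Uinit) = L(\P)$, any tree displaying $\Uinit$ displays $\P$ (as shown in the proof of Lemma~\ref{lm:substructure}), while Lemma~\ref{lm:substructure} gives that $\P$ is compatible exactly when $\Uinit$ is. Induction on $|L(U)|$ is the right measure because the recursion in Lines~\ref{alg:recurseBegin}--\ref{alg:recurseEnd} is applied to the components $W_1, \dots, W_p$ of $G_\P(U)$, whose species sets partition $L(U)$ by Lemma~\ref{lm:Wi}(ii); when $p \ge 2$, each $L(W_j)$ is a proper subset of $L(U)$. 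I would first record that the semi-universal processing in Lines~\ref{alg:forSemiU}--\ref{alg:forSemiULoop} neither changes $L(U)$ (because $L(v) = \bigcup_{w \in \Ch(v)} L(w)$) nor destroys validity (as in the proof of Lemma~\ref{lm:Wi}(i)), so the components computed in Step~\ref{alg:connected} are valid sets to which the induction hypothesis will apply.

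The base cases $|L(U)| \le 2$ are immediate: $U$ is trivially compatible and Steps~\ref{alg:size1}--\ref{alg:size2Do} return a tree on $L(U)$ that displays it. For the inductive step, assume $|L(U)| \ge 3$ and consider first the case $p = 1$. Here I claim $U$ is incompatible, so that Line~\ref{alg:incompat} is correct. Suppose instead that some tree $T$ displays $U$. Then $L(T) = L(U)$ and, by Lemma~\ref{lm:triples}, $\R(T_i|L(U)) \subseteq \R(T)$ for each $i$, so every edge of $\TG(\P|L(U))$ is an edge of $\TG(\{T\})$; that is, $\TG(\P|L(U))$ is a subgraph of $\TG(\{T\})$ on the common vertex set $L(U)$. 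Since $|L(T)| \ge 3$, Observation~\ref{obs:tripletDisc} shows $\TG(\{T\})$ has at least two connected components, and any subgraph on the same vertex set of a disconnected graph is itself disconnected. Hence $\TG(\P|L(U))$ is disconnected, so by Lemma~\ref{lm:Wi}(ii) it has at least two components $L(W_1), \dots, L(W_p)$, contradicting $p = 1$.

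When $p \ge 2$, each recursive call is on a valid set with strictly smaller species set, so the induction hypothesis applies to every $W_j$. For the direction where $U$ is compatible, fix a tree $T$ displaying $U$; restriction preserves the display relation (via Lemma~\ref{lm:triples}), and $(T_i|L(U))|L(W_j) = T_i|L(W_j)$, so $T|L(W_j)$ displays $T_i|L(W_j)$ for every $i$, i.e.\ $W_j$ is compatible. By induction each call returns a tree $t_j$ displaying $W_j$, and $\BuildG$ returns the tree $T_U$ formed by attaching $t_1, \dots, t_p$ under the root $r_U$. Conversely, if $\BuildG(U)$ returns a tree $T_U$, then each call returned a tree $t_j$ which, by the contrapositive of the induction hypothesis, displays $W_j$. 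In both cases what remains --- and what I expect to be the main obstacle --- is to show that the assembled tree $T_U$ actually displays $U$.

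To finish, I would verify that $T_U$ displays every $T_i|L(U)$. Here $\Cl(T_U) = \{L(U)\} \cup \bigcup_{j \in [p]} \Cl(t_j)$, while Observation~\ref{lem:restriction} describes $\Cl(T_i|L(U))$ as $L(U(i))$ together with the clusters $L(v)$ over all descendants $v$ of the nodes in $U(i)$. The top cluster $L(U(i)) = L(T_i) \cap L(U)$ is produced by the root $r_U$; each remaining cluster $L(v)$ is contained in $L(w)$ for some $w \in U(i)$, hence lies entirely within a single part $L(W_j)$ of the partition of Lemma~\ref{lm:Wi}(ii), and is therefore supplied by the subtree $t_j$, since $t_j$ displays the valid set $W_j$ and Observation~\ref{lem:restriction} applied to $W_j$ identifies $\Cl(t_j)$ with the clusters below $W_j(i)$. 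The delicate bookkeeping is to confirm that each node of $U(i)$ is routed to exactly one component, so that no cluster of $T_i|L(U)$ is split across two subtrees. Once this is checked, Lemma~\ref{lm:subtrees} certifies that $T_U$ is a phylogenetic tree with $L(T_U) = L(U)$, and the two directions together complete the induction and hence the theorem.
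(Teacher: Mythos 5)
Your proposal is correct, and its skeleton matches the paper's: both arguments follow the recursion, both justify Line~\ref{alg:incompat} by combining Observation~\ref{obs:tripletDisc}, Lemma~\ref{lm:Wi}(ii), and Lemma~\ref{lm:triples} to show that a compatible $U$ with $|L(U)|>2$ forces $\TG(\P|L(U))$ --- and hence $G_\P(U)$ --- to disconnect. (Your phrasing via ``a spanning subgraph of a disconnected graph is disconnected'' is a clean repackaging of the paper's explicit $ab|c$ contradiction.) Where you genuinely diverge is in the soundness direction. The paper proves that a returned tree $T$ displays $\P$ by constructing, for each $i$, a map $\phi_i : V(T_i) \to V(T)$ with $L(v) \subseteq L(\phi_i(v))$, assigning the semi-universal nodes of $\Urem$ to the root $r_U$ created at each level. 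You instead run a single induction on $|L(U)|$ establishing the full biconditional at every level, and verify display by a cluster-containment computation: $\Cl(T_U) = \{L(U)\} \cup \bigcup_j \Cl(t_j)$, and every cluster of $T_i|L(U)$ given by Observation~\ref{lem:restriction} is either $L(U(i)) = L(T_i)\cap L(U)$ (supplied by $r_U$) or lies below a single node of $U(i)$, which belongs to exactly one component $W_j$ since the $W_j$ partition $U$ and are valid by Lemma~\ref{lm:Wi}(i); so it is supplied by $t_j$ by the inductive hypothesis. This buys you something: the paper's stated criterion $L(v) \subseteq L(\phi_i(v))$ is, taken literally, too weak to certify display (mapping everything to the root satisfies it), and the reader must infer the stronger intended property; your argument goes straight to the definition $\Cl(T_i|L(U)) \subseteq \Cl(T_U|L(T_i)\cap L(U))$ and the ``routing'' worry you flag is exactly the point discharged by the partition of $U$ into components together with condition (V2) for each $W_j$. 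The cost is that you must also carry the forward direction (compatibility of $U$ implies compatibility of each $W_j$ via restriction and Lemma~\ref{lm:triples}) through the induction, which the paper sidesteps by invoking Lemma~\ref{lm:substructure} globally; both are sound.
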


\begin{proof}%[Proof of Theorem~\ref{thm:BuildG}.]
%The argument is along the lines of the proof of correctness of the \textsc{Build} algorithm \cite[pp.~120--121]{SempleSteel03}.
We first argue that if  \BuildG$(\Uinit)$ outputs \incompatible, $\P$ is indeed incompatible.  Assume, on the contrary, that $\P$ is compatible.  Then, there must be a call \BuildG$(U)$ for some valid subset $U$ such that $|L(U)| > 2$, in which the graph $G(U)$ of step \ref{alg:connected} has a single connected component, $W_1 = U$. By Lemma~\ref{lm:substructure}, $U$ must be compatible, so there exists a phylogeny $T_U$ that displays $U$.  By Observation~\ref{obs:tripletDisc}, $\TG(\{T_U\})$ has at least two connected components $A$ and $B$.    By Lemma~\ref{lm:Wi}(ii), however, $\TG(\P|L(U))$ is connected, so there exist species $a \in A$ and $b \in B$ such that $ab|c \in \R(\P|U)$. But $ab|c \notin \R(T)$, and, by Lemma~\ref{lm:triples}, $T$ does not display some tree in $\P|L(U)$, a contradiction. Thus, $G(U)$ has at least two components. 

Now, suppose that  \BuildG$(\Uinit)$ returns a tree $T$.  We prove that $T$ displays $\P$ by arguing that for each $i \in [k]$ there is a mapping $\phi_i: V(T_i) \rightarrow V(T)$ that maps every node $v \in V(T_i)$ to a node $\phi_i(v) \in V(T)$ such that $L(v) \subseteq L(\phi_i(v))$. 

By Lemma~\ref{lm:subtrees}, each recursive call $\BuildG(U)$ returns a phylogenetic tree $T_U$ for $L(U)$. Let $r_U$ denote the root of $T_U$.  We have two cases.

\textbf{Case (i): $|L(U)| \le 2$.}  For each $i \in [k]$, we must have $|U(i)| \in \{0,1\}$.  We only need to consider the case where $|U(i)| = 1$.  Let $v$ be the single node in $U(i)$.  Note that $L(v) \subseteq  L(r_U)$.  Thus, we make $\phi_i(v) = r_U$.  If $|L(U(i))| = 1$, we are done.  Otherwise, $|L(U(i))| = 2$.   Then, $v$ has two children, $v_1$ and $v_2$, both leaves, labeled with, say, species $s_1$ and $s_2$, respectively.  Node $r_U$ also has two children, $r_1$ and $r_2$.  Assume, without loss of generality, that these children are labeled with species $s_1$ and $s_2$, respectively.  Then, $L(v_j) = L(r_j)$ for $j \in \{1,2\}$.  Therefore, we make $\phi_i(v_j) = r_j$ for each $j \in \{1,2\}$.

\textbf{Case (ii): $|L(U)| > 2$.}  Let $\Ubef$ be the value of $U$ before entering the loop of lines \ref{alg:forSemiU}--\ref{alg:forSemiULoop}, and let $\Uaft$ be the value of $U$ at line~\ref{alg:connected}, after the loop of lines \ref{alg:forSemiU}--\ref{alg:forSemiULoop} terminates. Let $\Urem = \{v \in \Ubef :  v \in \Ubef(i)$ for some $i \in [k]$ such that $|\Ubef(i)| = 1\}$.  Then $\Uaft = (\Ubef \setminus \Urem) \cup \{u \in \Ch(v) : v \in \Urem\}$.   Assume inductively that every descendant of a node in $\Uaft$ is mapped to an appropriate node in $T_U$.  It therefore suffices to establish mappings for the nodes in $\Urem$.  Now, for every $v \in \Urem$, $L(v) \subseteq L(r_U)$.  Thus, we make $\phi(v) = r_U$ for every $v \in \Urem$.
\end{proof}

\section{Implementation\label{sec:implementation}}

We now explain how to implement \BuildG\ in order to solve the tree compatibility problem in $O(\MP \log^2 \MP)$ time.  Consider a call to $\BuildG(U)$.  Recall that we can assume that $G_\P(U)$ is connected.  $\BuildG(U)$ requires the following three pieces of information.
\vspace{-0.1cm}
\begin{enumerate}[({G}1)]
\vspace{-0.5\parskip}
 \itemsep1pt \parskip0pt \parsep0pt
\item\label{item:LU}
\emph{The value of $|L(U)|$.}  This number is needed in Lines \ref{alg:size1} and \ref{alg:size2} of \BuildG.
\item\label{item:singleton}
\emph{The set $J(U)$ of all $i \in [k]$ such that $|U(i)| = 1$.} Set $J(U)$ contains the indices $i$ considered in Lines \ref{alg:forSemiU}--\ref{alg:forSemiULoop} of \BuildG.
\item\label{item:Ui}
\emph{The set $U(i) = U \cap V(T_i)$ for each $i \in [k]$.}  For each $i \in J(U)$, $U(i)$ contains precisely the element $v$ used in Lines \ref{alg:forSemiULoopLet} and \ref{alg:forSemiULoop} of \BuildG.
\end{enumerate}
%
%Our implementation maintains the following.
%\begin{quote}
%\textbf{Invariant.} At the beginning of the execution of $\BuildG(U)$ for some valid set $U$, we have the following information.
%\vspace{-0.1cm}
%\begin{enumerate}[(1)]
%\vspace{-0.5\parskip}
% \itemsep1pt \parskip0pt \parsep0pt
%\item\label{item:LU}
%The value of $|L(U)|$. 
%\item\label{item:Ui}
%For each $i \in [k]$, the set $U(i) = U \cap V(T_i)$.
%\end{enumerate}
%\end{quote}
It is straightforward to obtain  (G\ref{item:LU}), (G\ref{item:singleton}), and (G\ref{item:Ui})
for the valid set $\Uinit$ of Equation \eqref{eqn:U0}:  $|L(\Uinit)| = n$,  $J(\Uinit) = [k]$, and, for every $i \in [k]$, $\Uinit(i) = \{r(T_i)\}$.
Now assume that we have (G\ref{item:LU}), (G\ref{item:singleton}), and (G\ref{item:Ui})  at the beginning of some  call to  $\BuildG(U)$. Steps~\ref{alg:forSemiU}--\ref{alg:forSemiULoop} modify $U$ and, therefore, $G_\P(U)$.  Suppose that, after Line  \ref{alg:connected}, $G_\P(U)$ has more than one connected component. We need to compute  (G\ref{item:LU}), (G\ref{item:singleton}), and (G\ref{item:Ui}) for each connected component, in order to pass this information to the recursive calls in Line~\ref{alg:recurse}.  That is, if $p > 1$, for each $j \in [p]$, we need to compute $|L(W_j)|$, $J(W_j)$, and $W_{j}(i) = W_j \cap V(T_i)$, for each $i \in [k]$.

We use the dynamic graph connectivity data structure by Holm et al.\ \cite{HolmLichtenbergThorup:2001}.  We refer to this data structure as \emph{HDT}.  HDT guarantees that, if we start with no edges in a graph with $N$ vertices, the amortized cost of each update is $O(\log^2 N)$.  For efficiency, however, do not use HDT directly on $G_\P(U)$.  The reason is that the edges of $G_\P(U)$ are defined via intersections of sets of species, which could make it costly to determine the new nodes and edges created as a result of Step~\ref{alg:forSemiULoop}.  
We avoid this problem through an indirect approach that uses an auxiliary graph $H_\P(U)$, defined below.   
As we shall see, $H_\P(U)$ offers another advantage over $G_\P(U)$:  maintaining $H_\P(U)$ only requires handling deletions, but maintaining $G_\P(U)$ additionally requires handling insertions.
%(Note that triplet-graph approaches \cite{HenzingerKingWarnow99} also only require deletion.)

%%In what follows, we assume that,  for each recursive call to $\BuildG(U)$, $G_\P(U)$ is connected.   To see that no generality is lost by doing so, observe first that each recursive call in Step~\ref{alg:recurse} is done on a connected graph.  On the other hand, $G_\P(\Uinit)$ could be disconnected. If this is the case, however, we just invoke $\BuildG$ on each connected component.  If the algorithm answers \incompatible\ for any component, $\P$ is incompatible.  Otherwise a phylogeny for $\P$ is obtained by connecting the roots of the phylogenies constructed for the various components to a new root.

We define $H_\P(U)$ as a subgraph of the graph $H_P$ constructed as follows.
For each species $s \in L(\P)$, create a new node $x_s \notin V(\P)$, and let $X_\P = \{x_s : s \in L(\P)\}$.
Then, $H_\P$ is the graph whose vertex set is $V(\P) \cup X_\P$ and whose edge set is
$E(\P) \cup \{(u,x_s) : u$ is a leaf in $T_i$,  for some $i \in [k]$, such that $\lambda(u) = s\}$. Note that $H_\P$ has $O(\MP)$ nodes and edges, and can be constructed from $\P$ in $O(\MP)$ time.  $H_\P$ is essentially the  \emph{display graph} for $\P$  \cite{BryantLagergren06}.  The display graph is the result of glueing together leaves in $\P$ labeled by the same species.   Contrast this with $H_\P$, which connects leaves with a common label through nodes in $X_\P$.  This minor difference with respect to the display graph serves to simplify our presentation.

Given a valid subset $U$ of $V(\P)$, we define $H_\P(U)$ as the subgraph of $H_\P$ induced by $\{v : v$ is a descendant of some node $u \in U\} \cup \{x_s  \in X_\P: s \in L(U)\}$. Note that $H_\P(\Uinit) = H_\P$.   See Figure \ref{fig:DisplayGraph}.

\begin{figure}
\begin{center}
\includegraphics[scale=0.34]{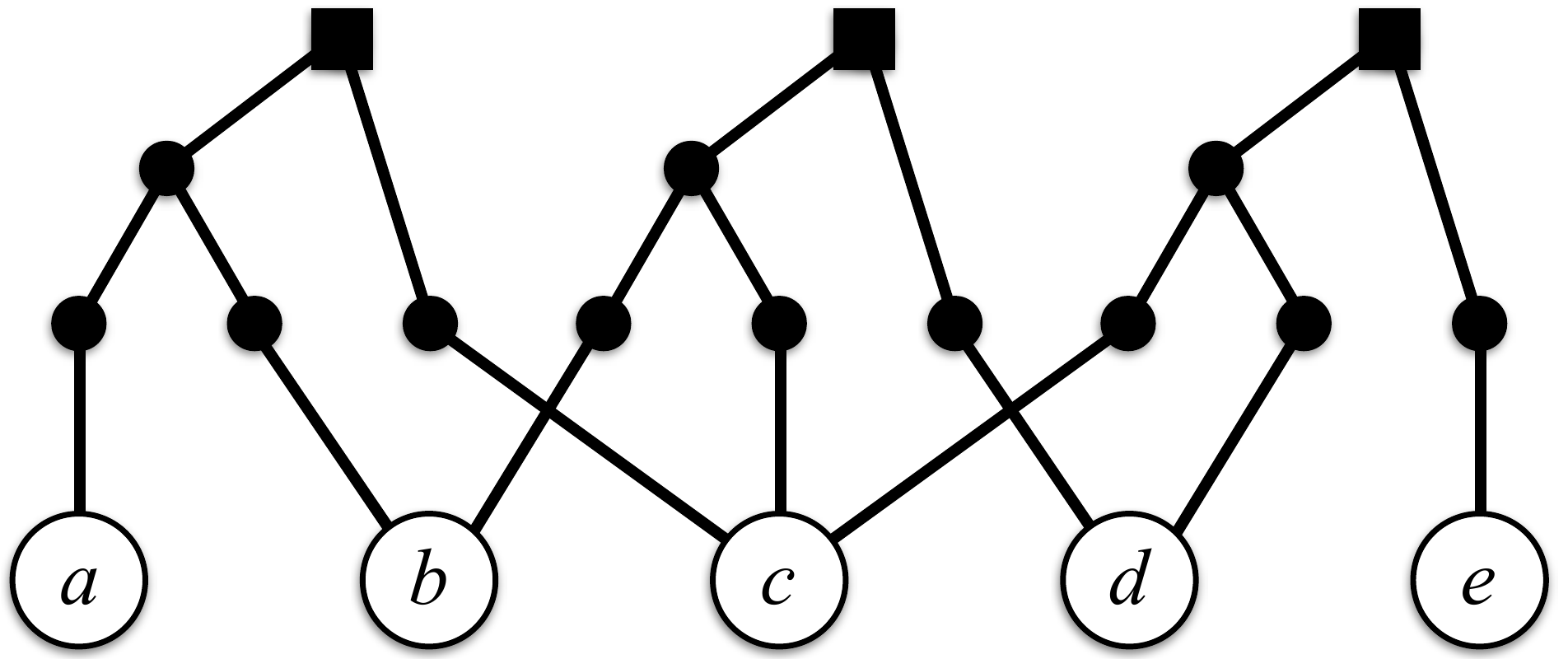}
\end{center}
\vspace{-3\parsep}
\caption{The graph $H_\P(\Uinit)$ for the profile $\P$ of Figure~\ref{fig:Profile}.  The nodes of $\Uinit$ are drawn as squares.
%The nodes and edges that are deleted in building $H_\P(U)$ for $U = \{u : \text{$u$ is a child of $r(T_i)$ for some $i \in \{1,2,3\}$}\}$ are shown in grey; the nodes of $U$ are drawn as squares. 
Nodes in the set $\{x_s : s \in L(\P)\}$ are labeled with the corresponding species.  Species labeling the leaves of trees in $\P$ are omitted.}
\label{fig:DisplayGraph}
\end{figure}

The next result states the basic properties of $H_\P(U)$.

\begin{lemma}\label{lm:GnH}
The following statements hold for any valid subset $U$ of $V(\P)$. 
\vspace{-0.1cm}
\begin{enumerate}[(i)]
\vspace{-0.5\parskip}
 \itemsep1pt \parskip0pt \parsep0pt
\item
Let $v$ be a node in $U$. If $U' = (U \setminus \{v\}) \cup \Ch(v)$, then $H_\P(U')$ is obtained from $H_\P(U)$ by deleting $v$ and every edge $(v,u)$ such that $u \in \Ch(v)$.
\item
Any two nodes in $U$ are in the same connected component in $G_\P(U)$ if and only if they are in the same connected component of $H_\P(U)$. 
\end{enumerate}
\end{lemma}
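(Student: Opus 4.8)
My plan is to treat the two parts separately: part~(i) is a direct structural verification, and part~(ii) is the real content. For part~(i), I would first use the validity of $U$: condition (V1) forces the nodes of $U$ lying in any single tree $T_i$ to be children of a common node, hence pairwise incomparable. This has two consequences. First, $v$ is a descendant of no node of $U \setminus \{v\}$, so when $v$ is replaced by $\Ch(v)$ the vertex $v$ is the only descendant that disappears, while (for internal $v$, using $L(v) = \bigcup_{c \in \Ch(v)} L(c)$) we have $L(U') = L(U)$, leaving the $X_\P$-vertices untouched. Second, the parent of $v$, being a strict ancestor of $v$, is likewise not a descendant of any node of $U$, so it is absent from $H_\P(U)$; and an internal $v$ is adjacent to no $x_s$. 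Hence the only edges incident to $v$ in $H_\P(U)$ are the edges $(v,c)$ with $c \in \Ch(v)$, and deleting $v$ together with these edges yields exactly $H_\P(U')$.

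For part~(ii), the bridge between $G_\P(U)$ and $H_\P(U)$ will be the map $\mu$ sending each $V(\P)$-vertex $w$ of $H_\P(U)$ to the node of $U$ of which $w$ is a descendant. I would begin with the easy direction. If $u,v \in U$ with $L(u) \cap L(v) \neq \emptyset$, choose a shared species $s$ and a leaf labelled $s$ in each of $T(u)$ and $T(v)$; the downward tree paths to these leaves remain inside the descendant set defining $H_\P(U)$, both leaves are adjacent to $x_s$, and $x_s \in H_\P(U)$ because $s \in L(U)$. Splicing at $x_s$ produces a $u$--$v$ path in $H_\P(U)$. Concatenating such paths along an arbitrary $G_\P(U)$-path shows that two nodes in the same component of $G_\P(U)$ lie in the same component of $H_\P(U)$.

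The converse is where I expect the main difficulty, and it rests on two points. The first is that $\mu$ is well defined: a vertex that descended from two distinct nodes of $U$ would make those nodes comparable, contradicting (V1); this is the crucial use of validity. The second is a cutting argument. Given an $H_\P(U)$-path between $u,v \in U$, I would split it at its $x_s$-vertices, which form an independent set and are never endpoints, so the path becomes a sequence of blocks of $V(\P)$-vertices joined by single $x_s$-crossings. Each block consists of tree edges and therefore lies in a single tree $T_i$; moreover, since the common parent of the siblings $U(i)$ is absent from $H_\P(U)$, the induced subgraph contains no edge between two distinct sibling subtrees, so $\mu$ is \emph{constant} on each block. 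Finally, at each crossing the two flanking leaves both carry the species $s$ and sit in the subtrees of the $\mu$-values of the adjoining blocks, so those two nodes of $U$ either coincide or share $s$ and are joined by an edge of $G_\P(U)$. Reading off the block $\mu$-values in order then gives a walk in $G_\P(U)$ from $u$ to $v$, which completes the equivalence. The delicate part throughout is keeping track of exactly which ancestors lie outside the descendant set, since that is what both makes $\mu$ single-valued and prevents the blocks from straddling two children of $U$.
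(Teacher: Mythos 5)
Your proof is correct and follows essentially the same route as the paper: part (i) is a direct structural check (the paper dismisses it as trivial), and part (ii) translates paths between $H_\P(U)$ and $G_\P(U)$ by splicing at the species vertices $x_s$. In fact your treatment of the hard direction is slightly more careful than the paper's: the paper simply strikes the non-$U$ vertices from an $H_\P(U)$-path and asserts that consecutive survivors share a species, which overlooks the possibility that the path traverses $T(u)$ for some $u \in U$ not lying on the path itself; your map $\mu$ and block decomposition handle exactly this case by inserting those intermediate nodes of $U$ into the resulting $G_\P(U)$-walk.
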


\begin{proof}
The proof of part (i) is trivial, so we focus on part (ii). We argue that, for any two nodes $v, w \in U$, $v$ and $w$ are in the same connected component in $G_\P(U)$ if and only if they are in the same connected component of $H_\P(U)$.

($\Leftarrow$) Suppose $v$ and $w$ are in the same connected component of $H_\P(U)$. Then, there exists a path $\pi$ between $v$ and $w$ in $H_\P(U)$. Let $\pi' = \langle u_1, \dots , u_m\rangle$ be the sequence obtained by striking out from $\pi$ all nodes not in $U$.  Note that $u_1 = v$ and $u_m= w$.  For each $i \in [m-1]$, the subpath of $\pi$ between $u_i$ and $u_{i+1}$ contains at least one node from $X_\P$, say $x_s$, for some $s \in L(\P)$.  But then $s \in L(u_i) \cap L(u_{i+1})$, so there is an edge between $u_i$ and $u_{i+1}$ in $G_\P(U)$. Thus, $\pi'$ is a path between $u_i$ and $u_{i+1}$ in $G_\P(U)$. Hence, $v$ and $w$ are in the same connected component in $G_\P(U)$.

($\Rightarrow$) Suppose $v$ and $w$ are in the same connected component of $G_\P(U)$. Then, there exists  a path between $v$ and $w$ in $G_\P(U)$. Let $\langle u_1, \dots , u_m\rangle$ be one such path, where $u_1 = v$ and $u_m= w$. By definition of $G_\P(U)$, for each $i \in [m-1]$, $L(u_i) \cap L(u_{i+1}) \neq \emptyset$.  Pick any species $s \in L(u_i) \cap L(u_{i+1})$. Hence, in $H_\P(U)$ there is a path from $u_i$ to $u_{i+1}$ that goes through $x_s$. Since this holds for each $i \in [m-1]$, there exists a path from $u_1 = v$ to $u_m = w$ in $H_\P(U)$. Thus, $v$ and $w$ are in the same connected component in $H_\P(U)$.
\end{proof}

% through edge and vertex deletions in $H_\P(U)$, without testing set intersection.  
By Lemma~\ref{lm:GnH}(ii), the connected components $W_1, \dots , W_p$ of $G_\P(U)$ can be put into a one-to-one correspondence with the connected components $Y_1, \dots , Y_p$ of $H_\P(U)$ so that $W_j = Y_j \cap U$ for each $j \in [p]$. 

We represent $H_\P(U)$ using the aforementioned HDT data structure. For each connected component $Y$ of $H_\P(U)$, we maintain three fields: 
\vspace{-0.1cm}
\begin{enumerate}[(H1)]
\vspace{-0.5\parskip}
 \itemsep1pt \parskip0pt \parsep0pt
\item
%(1) $Y.\SIZE$, which is the cardinality of $Y$, 
$Y.\COUNT$, the cardinality of $Y \cap X_\P$, 
\item $Y.\SEMI$, a doubly-linked list that contains all indices $i \in [k]$ such that $|U(i)| = 1$, and 
\item
$Y.\LIST$, an array where, for each $i \in [k]$, $Y\!.\LIST[i]$ is a doubly-linked list  consisting of the elements of $Y \cap U(i)$.
\end{enumerate}
\vspace{-0.1cm}

Recall that we assume that $G_\P(U)$ is connected at the beginning of a call to $\BuildG(U)$.  Thus, by Lemma \ref{lm:GnH},  $H_\P(U)$ has a single connected component, $Y$.   Then, $|L(U)| = Y.\COUNT$, $J(U) = Y.\SEMI$, and $Y.\LIST[i]$ contains the elements of $U(i)$, for each $i \in [k]$. Thus, the three fields of $Y$ provide $\BuildG(U)$ with the information that it needs  --- that is, (G\ref{item:LU}), (G\ref{item:singleton}), and (G\ref{item:Ui}). 
In particular, they allow us to easily find each node $v$ considered in Line \ref{alg:forSemiULoopLet} of $\BuildG(U)$. Line \ref{alg:forSemiULoop} is then performed as a series of edge deletions, one for each edge $(v,u)$ such that $u \in \Ch(v)$, followed by the deletion of $v$ (we provide further details below).  By Lemma \ref{lm:GnH}(i), this correctly updates $H_\P(U)$.  The deletions break $H_\P(U)$ down into a collection of connected components $Y_1, \dots , Y_p$.  For each $j \in [p]$, $Y_j$ corresponds to a connected component $W_j$ of $G_\P(U)$ that (if $p > 1$) is processed in a recursive call in Line \ref{alg:recurse}.  We need to compute   $Y_j.\COUNT$, $Y_j.\SEMI$, $Y_j.\LIST$ for each $j \in [p]$, in order to provide this information to the recursive calls.   

The total number of edge and node deletions executed by $\BuildG(\Uinit)$ --- including all deletions conducted by the recursive calls --- cannot exceed the total number of edges and nodes in $H_\P$, which is $O(\MP)$.  The HDT data structure allows us to maintain connectivity information throughout the entire algorithm in $O(\MP \log^2 \MP)$.  In the remainder of this section, we show that we can maintain the $\COUNT$, $\SEMI$, and $\LIST$ fields throughout the entire algorithm in total time $O(\MP \log^2 \MP)$.  We also argue that all the required information for $H_\P(\Uinit)$ can be initialized in $O(\MP)$ time.  
%The claimed $O(\MP \log^2 \MP)$ bound for testing compatibility follows.

Let $\Yinit = V(\P) \cup X_\P$ be the vertex set of $H_\P(\Uinit)$.  Then, $\Yinit$ is the single connected component of $H_\P(\Uinit)$.  We initialize the data fields of $\Yinit$ as follows:
%$\Yinit.\SIZE = |\Yinit|$,
(1) $\Yinit.\COUNT = |L(\P)|$, (2) $\Yinit.\SEMI$ is the set $[k]$, and (3) for each $i \in [k]$, $\Yinit.\LIST[i]$ consists of $r(T_i)$.  Thus, we can initialize all data fields in $O(\MP)$ time.

We assume that every node $v$ in $H_\P(U)$ is either \emph{marked}, if $v \in U$, or \emph{unmarked}, if $v \notin U$.  Initially, each node $v \in \Uinit$ is marked, and every node $v \in \Yinit \setminus \Uinit$ is unmarked.  We also assume that for each node $v$  in $H_\P(U)$, we maintain sufficient information to be able to determine in $O(1)$ time whether $v \in X_\P$ or $v \in V(\P)$, and that, in the latter case, we have $O(1)$-time access to the index $i \in [k]$ such that $v \in V(T_i)$.
For each $i$ such that $Y\!.\LIST[i]$ contains exactly one element, we maintain a pointer from  $Y\!.\LIST[i]$ to the entry for $i$ in $Y.\SEMI$.  This allows us to update $Y.\SEMI$ in $O(1)$ time when $U(i)$ is no longer a singleton. For each marked node $v \in Y$ (so $v \in U$), we maintain a pointer from $v$ to the element in $Y.\LIST[i]$ that contains $v$.  This allows us to update $Y.\LIST[i]$ in $O(1)$ time when $v$ becomes unmarked.

%; otherwise, $v \in X_\P$ and $v.\ID = 0$.  The field $v.\TYPE$ takes one of three values: $v.\TYPE = \mathtt{X}$ if $v \in X_\P$, $v.\TYPE = \mathtt{U}$ if $v \in U$, and $v.\TYPE = \mathtt{N}$ otherwise.  

Consider a call to $\BuildG(U)$ for some valid set $U$.   Step  \ref{alg:rU} takes $O(1)$ time.  Since $H_\P(U)$ initially consists of a single connected component, say $Y$, and we have $Y.\COUNT$,  Steps \ref{alg:size1}--\ref{alg:size2Do} also take $O(1)$ time.
Let $H = H_\P(U)$.  
%As we have seen, $H$ is connected.
%In Holm et al.'s data structure, there will be a spanning tree for each connected component of $H_\P(\Ubef)$, including $Y$.
We implement the loop in lines \ref{alg:forSemiU}--\ref{alg:forSemiULoop} as follows. First, we enumerate the indices in $J = J(U)$ in $O(|J|)$ time by listing the elements of $Y.\SEMI$. For each $i \in J$,  we retrieve and remove the single element $v_i$ of  $U(i)$ from $Y.\LIST[i]$, and then delete $i$ from $Y.\SEMI$. This takes $O(1)$ time.  We unmark $v_i$, and for every node $u \in \Ch(v_i)$ we mark $u$ and add it to $Y.\LIST[i]$. This takes $O(1)$ time per edge.  We then successively delete each edge $(v_i,u)$ such that $u \in \Ch(v_i)$, updating (H1)--(H3) for each newly-created component along the way.  Once these edges are deleted, we delete $v_i$ itself.  By Lemma \ref{lm:GnH}(i), the result is the graph $H_\P(U)$ for the new set $U$. Let us focus on how to handle the deletion of a single edge $e = (v_i,u)$. 

Let $Y'$ be the connected component of $H$ that currently contains $v_i$.  We query the HDT data structure to determine, in $O(\log^2 \MP)$ amortized time, whether deleting $(v_i,u)$ splits $Y'$ into two components. If $Y'$ remains connected, no updates are needed.  Otherwise, $Y'$ is split into two parts $Y_1$ and $Y_2$. To fill in the \COUNT, \SEMI, and \LIST\ fields of $Y_1$ and $Y_2$, we use the well-known technique of scanning the smaller component \cite{EvenShiloach:1981}. The HDT data structure maintains the sizes of the various components \cite{HolmLichtenbergThorup:2001}, so we can determine in $O(1)$ which of $Y_1$ and $Y_2$ has fewer nodes.  Suppose without loss of generality that $|Y_1| \le |Y_2|$.  We initialize $Y_2.\COUNT$ and $Y_2.\LIST$ to $Y'.\COUNT$ and $Y'.\LIST$, respectively.  We initialize $Y_1.\COUNT$ to $0$ and $Y_1.\LIST[i]$
to null for each $i \in [k]$. 
We then scan each node $v$ in $Y_1$, and do the following. If $v \in X_\P$, we decrement $Y_2.\COUNT$ and increment $Y_1.\COUNT$.  Otherwise $v \in V(\P)$; assume that $v \in V(T_i)$.   If $v$ is marked, we remove $v$ from $Y_2.\LIST[i]$ and add $v$ to $Y_1.\LIST[i]$. This operation requires at most one update in each of $Y_1.\SEMI$ and $Y_2.\SEMI$; each update takes $O(1)$ time. 
%\FIX{The time to scan $v$ is $O(1)$. }

%Let G = (V,E) be a graph, and let N = |V|.  Every node v in V has a connected component number cc(v).  Suppose G is initially connected, so cc(v) = 1 for all v in V.  G undergoes a series of edge deletions, which eventually split the graph into different components.  Our goal is to update the cc-values so that cc(v) = cc(u) if and only if v and u are in the same connected component.  
%

We claim that any node $v$ is scanned $O(\log \MP)$ times over the entire execution of $\BuildG(\Uinit)$.  To verify this, let $N(v)$ be the number of nodes in the connected component containing $v$.  Suppose that, initially, $N(v) = N$.  Then, the $r$th time we scan $v$, $N(v) \le N/2^r$.  Thus, $v$ is scanned $O(\log N)$ times.  The claim follows, since $N = O(\MP)$. Therefore, the total number of updates over all nodes is $O(\MP \log \MP)$, and the work per update is $O(1)$. 
%Observe that this amortized analysis is possible because our algorithm only deletes edges and nodes, so the size of any connected component is non-increasing.

To summarize, the work done by $\BuildG$ consists of three parts: (i) initialization, (ii) maintaining connected components, and (iii) maintaining the \COUNT, \SEMI, and \LIST\ fields for each connected component. Part (i) takes $O(\MP)$ time. Part (ii) involves $O(\MP)$ edge and node deletions on the HDT data structure, at an amortized cost of $O(\log^2 \MP)$ per deletion. Part (iii) involves scanning the nodes of our graph every time a deletion creates a new component, for a total of $O(\MP \log \MP)$ scans, at $O(1)$ cost per scan, over the entire execution of $\BuildG$.
This yields our main result.

\begin{theorem}\label{thm:BuildGA}
Let $\Uinit$ be the set defined in Equation \eqref{eqn:U0}.  Then, there exists and implementation of \BuildG\ such that \BuildG$(\Uinit)$ runs in $O(\MP \log^2 \MP)$ time.
\end{theorem}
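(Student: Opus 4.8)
The plan is to assemble the three ingredients developed in the preceding discussion---the auxiliary display graph $H_\P$, the HDT dynamic connectivity structure, and the per-component fields $\COUNT$, $\SEMI$, and $\LIST$---into a single amortized running-time bound. The first task is to argue \emph{correctness} of the indirect representation: at the start of any call $\BuildG(U)$, where we assume $G_\P(U)$ is connected, the graph $H_\P(U)$ has a single connected component $Y$ (by Lemma~\ref{lm:GnH}(ii)), and reading off $Y.\COUNT$, $Y.\SEMI$, and $Y.\LIST$ must yield exactly the quantities (G\ref{item:LU})--(G\ref{item:Ui}) the algorithm consumes. I would establish this as an invariant maintained across the recursion, using the marked/unmarked convention (a node of $H_\P(U)$ is marked precisely when it lies in $U$) together with the update rule of Lemma~\ref{lm:GnH}(i): deleting $v$ and its edges to its children, as carried out in Steps~\ref{alg:forSemiU}--\ref{alg:forSemiULoop}, transforms $H_\P(U)$ into $H_\P(U')$ faithfully, and the correspondence $W_j = Y_j \cap U$ lets the newly created components supply the fields for the recursive calls.

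With correctness in hand, I would split the total work into the three parts identified above and bound each. Initialization of $H_\P$, the HDT structure, and the fields of the sole initial component $\Yinit$ costs $O(\MP)$, since $H_\P$ has $O(\MP)$ nodes and edges and each field can be filled directly from the roots $r(T_i)$. For connectivity maintenance, the key observation is that the entire execution---across all recursive calls---deletes each node and each edge of $H_\P$ at most once, so there are $O(\MP)$ deletions in total; charging the HDT amortized cost of $O(\log^2 \MP)$ to each deletion gives $O(\MP \log^2 \MP)$ for this part.

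The crux of the argument, and the part I expect to require the most care, is bounding the cost of maintaining the $\COUNT$, $\SEMI$, and $\LIST$ fields under component splits, via the scan-the-smaller-component technique. Whenever a deletion splits a component $Y'$ into $Y_1$ and $Y_2$ with $|Y_1| \le |Y_2|$, we reuse $Y'$'s fields for $Y_2$ and recompute those of $Y_1$ by scanning only its smaller vertex set. The essential accounting lemma is that every fixed node is scanned $O(\log \MP)$ times over the whole execution: each time a node is scanned, its component is the smaller half of a split, so the size of that component at least halves, and since it starts at $O(\MP)$ it can halve only $O(\log \MP)$ times. Multiplying by the $O(\MP)$ nodes bounds the number of scans by $O(\MP \log \MP)$. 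The second thing to verify is that each individual scan step does only $O(1)$ work, and this is exactly where the auxiliary pointers matter: the pointer from a marked node to its entry in $\LIST[i]$ lets us move it between $Y_1.\LIST[i]$ and $Y_2.\LIST[i]$ in constant time, and the pointer from a singleton $\LIST[i]$ to its $\SEMI$ entry keeps $\SEMI$ current in constant time. Hence field maintenance costs $O(\MP \log \MP)$. Summing the three contributions, the dominant term is $O(\MP \log^2 \MP)$, which is the claimed bound.
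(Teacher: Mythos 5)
Your proposal is correct and follows essentially the same route as the paper: the display graph $H_\P(U)$ maintained under deletions via the HDT structure, the $\COUNT$/$\SEMI$/$\LIST$ fields updated by scanning the smaller component after each split, the $O(\log \MP)$-scans-per-node halving argument, and the three-part accounting that sums to $O(\MP \log^2 \MP)$. No substantive differences to report.
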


\section{Discussion\label{sec:discussion}}

A trivial lower bound for the tree compatibility problem is $\Omega(\MP)$, the time to read the input.  Thus, our result leaves us a polylogarithmic factor away from an optimal algorithm for compatibility.  Is it possible to reduce or even eliminate this gap?  The bottleneck is the time to maintain the information associated with the various components of $H_\P(U)$.  It is conceivable that the special structure of this graph and the way the deletions are performed could be used to our advantage.  A second question is how well our algorithm performs in practice.  To investigate this, it should be possible to leverage existing knowledge on the empirical behavior of dynamic connectivity data structures \cite{Iyer+2001}.

\bibliographystyle{abbrv}
\bibliography{amt}

\end{document}